\newif\ifready\readytrue
\documentclass{article}
\usepackage{hyperref}
\usepackage{times}
\usepackage{amsmath,amssymb,amsfonts,amsthm}
\usepackage{mathtools}
\usepackage{algorithm}
\usepackage{algorithmic}
\usepackage{graphicx}
\usepackage[dvipsnames,table]{xcolor}
\usepackage{wrapfig}
\usepackage{tikz}
\usetikzlibrary{shapes, arrows, tikzmark, calc}
\usepackage{setspace}
\usepackage{nicefrac}
\usepackage[framemethod=tikz]{mdframed}
\usepackage{xspace, xparse}
\usepackage{pgfplots}
\usepackage{framed}
\usepackage{caption}
\usepackage{subcaption}
\usepackage{changepage}
\usepackage[capitalise,noabbrev,nameinlink]{cleveref}
\usepackage[shortlabels,inline]{enumitem}
    \pgfplotsset{compat=1.5}
\usepackage{thmtools, thm-restate}
\usepackage{typed-checklist}
\usepackage{multirow}
\usepackage{booktabs}
\usepackage{makecell}
\usepackage{comment}
\usepackage{float}
\usepackage{url}
\usepackage{nicematrix}
\usepackage{color-edits} 

\crefname{assumption}{assumption}{assumptions}
\Crefname{assumption}{Assumption}{Assumptions}
\crefname{step}{step}{steps}
\Crefname{step}{Step}{Steps}

\definecolor{mydarkblue}{rgb}{0,0.08,0.45}
\hypersetup{ %
    colorlinks=true,
    linkcolor=mydarkblue,
    citecolor=mydarkblue,
    filecolor=mydarkblue,
    urlcolor=mydarkblue,
}

\newcommand{\E}[0]{\mathop{\bbE}\xspace}

  \newcommand{\eps}[0]{\ensuremath{\varepsilon}}
  \let\epsilon\eps
  
  \newcommand{\bbE}{\ensuremath{{\mathbb E}}\xspace}

\newtheorem{theorem}{Theorem}[section]

\newtheorem{definition}[theorem]{Definition}

\theoremstyle{definition}

\newcommand{\namedref}[2]{\hyperref[#2]{#1~\ref*{#2}}\xspace}

\usepackage{soul}

\NewDocumentEnvironment{pf}{o}
  {\IfNoValueTF{#1}{\begin{proof}}{\begin{proof}[Proof (#1)]}}
  {\IfNoValueTF{#1}{\end{proof}}{\end{proof}}}

\mathchardef\hyph="2D

\addauthor[Quanquan]{qq}{purple}
\addauthor[Pranay]{pranay}{blue}
\addauthor[Adam]{adam}{ForestGreen}
\addauthor[Ziteng]{ziteng}{brown}

\newcommand{\expect}{\mathbb{E}}

\newcommand{\Var}{\mathrm{Var}}

\newcommand{\predictedset}{$\mathrm{P}$}

\usepackage[table]{xcolor}

\newcommand{\lowrow}{\rowcolor{blue!15}}    
\newcommand{\highrow}{\rowcolor{green!15}}  
\usepackage{microtype}
\usepackage{graphicx}
\usepackage{subcaption}
\usepackage{booktabs} 

\usepackage{hyperref}
\usepackage{tabularx}
\newcolumntype{Y}{>{\centering\arraybackslash}X} 

\ifready
\newcommand{\ziteng}[1]{}
\newcommand{\qq}[1]{}

\newcommand{\adam}[1]{}

\else
\newcommand{\ziteng}[1]{\textcolor{green}{Ziteng: #1}}
\newcommand{\qq}[1]{\textcolor{blue}{Quanquan: #1}}

\newcommand{\adam}[1]{{\color{olive}Adam: #1}}

\fi

\newcommand{\nhg}{\mathrm{NHG}}

\newcommand{\squishlist}{
 \begin{list}{$\bullet$}
  { \setlength{\itemsep}{0pt}
     \setlength{\parsep}{1pt}
     \setlength{\topsep}{1pt}
     \setlength{\partopsep}{0pt}
     \setlength{\leftmargin}{1em}
     \setlength{\labelwidth}{1em}
     \setlength{\labelsep}{0.5em} } }

\newcommand{\squishend}{
  \end{list}
}
\usepackage[accepted]{icml2026}

\usepackage{amsmath}
\usepackage{amssymb}
\usepackage{mathtools}
\usepackage{amsthm}
\usepackage{thm-restate}

\icmltitlerunning{LAPRAS : Learning-Augmented PRivate  Answering for linear query Streams.}

\begin{document}

\twocolumn[
  \icmltitle{LAPRAS: Learning-Augmented PRivate  Answering for linear query Streams}



  \icmlsetsymbol{equal}{*}

  \begin{icmlauthorlist}
    \icmlauthor{Pranay Mundra}{yale}
    \icmlauthor{Adam Sealfon}{google}
    \icmlauthor{Ziteng Sun}{google}
    \icmlauthor{Quanquan C. Liu}{yale}
  \end{icmlauthorlist}

  \icmlaffiliation{yale}{Yale University, New Haven, CT, USA}
  \icmlaffiliation{google}{Google Research, New York City, NY, USA}

  \icmlcorrespondingauthor{Pranay Mundra}{pranay.mundra@yale.edu}
  \icmlcorrespondingauthor{Adam Sealfon}{adamsealfon@google.com}
  \icmlcorrespondingauthor{Ziteng Sun}{zitengsun@google.com}
  \icmlcorrespondingauthor{Quanquan C. Liu}{quanquan.liu@yale.edu}
  \icmlkeywords{Machine Learning, ICML}

  \vskip 0.3in
]



\printAffiliationsAndNotice{}  

\begin{abstract}

Modern database workloads are highly predictable: query streams are dominated by recurring jobs and templates, even when their arrival order is not known in advance. This motivates a learning-augmented view of online differentially private (DP) analytics: can algorithms utilize predictions about \emph{which} queries will occur to improve utility under a single global privacy budget, while remaining robust when predictions are wrong? We study online DP query answering, where a curator must answer a stream $Q$ of $S$ linear queries arriving in uniformly random order under privacy budget $(\epsilon,\delta)$. We present \emph{LAPRAS}, which assumes access to an oracle that outputs a prediction set of queries likely to appear in the stream and uses it to guide privacy spending. LAPRAS answers predicted queries using the offline-optimal Matrix Mechanism and answers the remaining queries online from a residual budget. To pace spending across an unknown number of unpredicted queries, we introduce \emph{Smooth Allocation}, which forms an unbiased stopping-time estimate $\widehat{B}$ from the first $T=\Theta(\log^2 S)$ unpredicted queries and continuously recalibrates per-query expenditure. Empirically, over two real datasets, we validate the intended consistency--robustness trade-off: LAPRAS achieves near-offline utility under high overlap and degrades gracefully to baseline-level performance when overlap is low. 


\end{abstract}

\section{Introduction}
\label{sec:intro}
In the era of data-driven decision-making, organizations collect vast volumes of sensitive information, ranging from financial transaction logs to medical health records. The utility of this data lies in the ability to run aggregate analytics---counting queries, histograms, and linear summations---that reveal population level trends without compromising individual privacy. \textbf{Differential Privacy (DP)}~\citep{dwork2014foundations} has emerged as the rigorous standard for such systems, quantifying privacy loss and ensuring that the output of a computation remains virtually indistinguishable whether any single individual’s record is included or excluded.


However, deploying differential privacy (DP) in real-world database systems exposes a fundamental mismatch between \emph{online query answering} and the \emph{offline} assumptions underlying optimal DP mechanisms. In the offline model, the curator is given a fixed workload $\mathrm{W}$ in advance, enabling global optimization of noise: mechanisms such as the Matrix Mechanism~\citep{10.1145/1807085.1807104,10.14778/3407790.3407798} exploit correlations in $\mathrm{W}$ to design a strategy that minimizes total error. Modern systems that support real-time dashboards, monitoring, and interactive exploration instead face a stream of queries $q_1,\dots,q_S$ and must return each private answer immediately, without knowledge of future queries. This uncertainty provably separates the online and offline regimes: there exist workloads for which any online mechanism incurs exponentially larger error than an offline mechanism that sees the full workload~\citep{BunSteinkeUllman2019Online,10.1109/FOCS.2010.85}. Intuitively, without knowing whether future queries will be correlated, an online mechanism must budget conservatively, often adding noise that render the data useless.

Real-world database workloads are highly predictable. Large-scale studies show that production query streams are dominated by recurring jobs and templates: over $60\%$ of SCOPE jobs recur on fixed schedules~\citep{10.1145/3183713.3190656,10.1145/3626246.3653391}, and a small set of templates can account for over $90\%$ of resource consumption in SQL Server and Azure SQL telemetry~\citep{10.14778/3229863.3236222}. This structure is already leveraged by self-driving components such as SageDB~\citep{47669} and Sibyl~\citep{huang2024sibyl}, which learn workload regularities to anticipate query characteristics and optimize execution. Motivated by this evidence, we assume access to predictions about \emph{which} queries are likely to appear, while making only a weak assumption about \emph{when} they appear by modeling the stream as a uniformly random order of the realized workload. This is a natural abstraction for systems in which multiple recurring jobs, user interactions, and scheduled pipelines interleave, so the workload structure is stable but the precise arrival order is not reliably known in advance. This setting naturally aligns with \textbf{learning-augmented algorithms}~\citep{KhodakAminDickVassilvitskii2023}: predictions can be exploited to achieve near-optimal utility when accurate (consistency), while the mechanism must degrade gracefully to standard worst-case guarantees when they are not (robustness).

We propose \textbf{LAPRAS}, a framework that uses workload predictions obtained, for example, from learned models over prior query logs to improve online differentially private query answering. Given a predictor that identifies a subset $\mathrm{P}$ of queries likely to appear, we answer $\mathrm{P}$ using an offline-optimal batch mechanism (e.g., the Matrix Mechanism~\citep{10.1145/1807085.1807104}), exploiting correlations to precompute low-noise releases that can be served at zero additional privacy cost when they arrive. The remaining queries must be answered online from a residual budget, creating a budget-pacing problem: how do we allocate the remaining budget across an unknown number of unpredicted queries? LAPRAS addresses this with \emph{Smooth Allocation}: using the random-order assumption, we form an unbiased stopping-time estimator $\hat{B}$ from the arrival positions of the first $T=\Theta(\log^2 S)$ unpredicted queries, and allocate per-query privacy spend proportional to $\epsilon_{\mathrm{rem}}/\hat{B}$, updating as $\hat{B}$ stabilizes over the stream. Our contributions are:
\squishlist
\item \textbf{LAPRAS:} We formalize a learning-augmented DP mechanism that exploits the power of predictions, with batch processing for predicted queries and adaptive online processing for unpredicted ones. This design enables the system to smoothly interpolate between the utility of the best offline algorithms (in high-overlap/accurate prediction regimes) and robust online baselines (in low-overlap regimes).

\item \textbf{Smooth Allocation \& Unbiased Estimation:} We derive and analyze the Smooth Allocation strategy. We provide the first proof that a stopping-time estimator $\hat{B}$ based on bad query arrival positions is unbiased and concentrates sufficiently fast to drive privacy budget allocation without violating composition guarantees.

\item \textbf{Theoretical Utility Guarantee:} For a stream $\mathcal{S}$ with $B$ unpredicted queries, we prove that the total expected squared error of LAPRAS satisfies 
$\sum_{q\in\mathcal{S}}\mathbb{E}[U_{\textsc{lapras}}(q)^2] = O\!\left(\frac{B^2\ln(1/\delta)}{\varepsilon^2}\right) + O\!\left(\sum_{q\in\mathcal{S}}\mathbb{E}[U_{\mathrm{MM}}(q)^2]\right)$, and moreover $\sum_{q\in\mathcal{S}}\mathbb{E}[U_{\textsc{lapras}}(q)^2] \le c \cdot \sum_{q\in\mathcal{S}}\mathbb{E}[U_{Online}(q)^2]$ for a fixed constant $c\ge 1$. 



\item \textbf{Empirical Validation:} We evaluate on two datasets against OfflineMM and the online Independent Noise baseline. LAPRAS exhibits the intended consistency--robustness trade-off: at high overlap ($\rho\!\approx\!1$) it reduces median MAE by over an order of magnitude (Adult: $193.4\!\to\!14.3$; Gowalla: $181.2\!\to\!17.1$, $\epsilon{=}1.0$), while at low overlap ($\rho\!\approx\!0$) it remains comparable to Independent Noise (Adult: $201.8$ vs.\ $186.5$; Gowalla: $213.9$ vs.\ $204.1$).

\squishend

\section{Related Works}
\label{sec:rel-works}
The hardness of answering adaptive, dynamic query streams under differential privacy is well established~\citep{BunSteinkeUllman2019Online,10.1109/FOCS.2010.85}. Classical mechanisms such as Private Multiplicative Weights (PMW)~\citep{10.1109/FOCS.2010.85} maintain a synthetic database to answer queries with provable error guarantees, but their update steps scale poorly with the domain size, limiting practicality in high-dimensional settings. LAPRAS targets a different point in the design space: it retains a simple computational profile (matrix operations and per-query noise) while addressing the core online bottleneck: \emph{how to spend a fixed global budget when the number of costly queries is unknown}. Privacy odometers and filters~\citep{pmlr-v202-whitehouse23a,10.5555/3157096.3157312} provide accounting primitives for adaptive composition, but they are descriptive rather than prescriptive: they track privacy loss, whereas LAPRAS’s stopping-time estimator and Smooth Allocation provide an explicit \emph{spending policy} that allocates the remaining budget over the residual stream to improve utility.

A complementary line of work reduces online privacy cost by reusing previously released noisy answers. CacheDP~\citep{DBLP:journals/pvldb/MazmudarHLR022} is representative: it maintains a DP cache and answers new queries via post-processing when they can be expressed using cached information, thereby reducing the need for fresh noise. This approach is fundamentally \emph{reactive} and depends on historical redundancy (with an inherent cold-start cost), whereas LAPRAS is \emph{proactive}: it uses predictions to precompute a workload-aware representation before the stream arrives and then paces spending only on the unpredicted remainder. For this reason, a direct head-to-head experimental comparison is not especially informative: CacheDP is designed to minimize \emph{additional privacy cost} subject to a target accuracy requirement using a cache-dependent state, while our setting fixes a global budget and studies the resulting error as a function of prediction accuracy. 
In practice, these approaches are best viewed as orthogonal: CacheDP exploits repetition in past queries, while LAPRAS exploits predictability of future workload structure.
\section{Privacy Tools}
\label{sec:privacy_tools}
Our work is based on the framework of Differential Privacy (DP)~\citep{dwork2014foundations}, which provides strong, worst-case guarantees against privacy leakage.

\begin{definition}[($\epsilon, \delta$)-Differential Privacy \cite{dwork2014foundations}]
A randomized algorithm $\mathcal{A}$ satisfies \textbf{$(\epsilon, \delta)$-Differential Privacy} if for any two adjacent databases $D_1$ and $D_2$ that differ by at most one record, and for any set of possible outputs $O \subseteq \text{Range}(\mathcal{A})$, the following inequality holds:
$$ P[\mathcal{A}(D_1) \in O] \le e^\epsilon P[\mathcal{A}(D_2) \in O] + \delta $$
\end{definition}

Here, $\epsilon$ (the privacy budget) bounds the multiplicative divergence between the output distributions, controlling the worst-case information leakage. The parameter $\delta$ (the failure probability) represents the probability that the multiplicative bound fails to hold. Typically, $\delta$ is chosen to be negligible (e.g., $\delta < 1/|D|$). If an algorithm satisfies the definition above for $\delta = 0$, then it is $\epsilon$-differentially private. 

\begin{definition}[Global Sensitivity~\cite{dwork2014foundations}]
For a function $f: \mathcal{D} \to \mathbb{R}^k$, its $L_p$ global sensitivity, denoted $\Delta_p f$, is the maximum possible change in the output of $f$ over all pairs of adjacent databases:
$$ \Delta_p f = \max_{D_1, D_2} ||f(D_1) - f(D_2)||_p $$
\end{definition}
In this work, we focus on linear counting queries where adding or removing a single individual's data changes the count by at most 1. Therefore, the $L_1$ and $L_2$ global sensitivities are both 1.

\begin{theorem}[The Gaussian Mechanism \cite{dwork2014foundations}]
\label{thm:gaussian}
Let $f: \mathcal{D} \to \mathbb{R}^k$ be a function with $L_2$ global sensitivity $\Delta_2 f$. The algorithm $\mathcal{A}(D) = f(D) + N(0, \sigma^2 I_k)$, which adds noise from a Gaussian distribution with variance $\sigma^2$ to each component of the output, is $(\epsilon, \delta)$-differentially private for $\epsilon \in (0, 1)$ if the standard deviation $\sigma$ is chosen such that:
$$ \sigma = \frac{\Delta_2 f \sqrt{2\ln(1.25/\delta)}}{\epsilon} $$
\end{theorem}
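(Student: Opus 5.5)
We prove that $\mathcal{A}(D)=f(D)+Z$ with $Z\sim N(0,\sigma^2 I_k)$ is $(\epsilon,\delta)$-DP by bounding the privacy loss random variable. Fix adjacent $D_1,D_2$ and let $v=f(D_1)-f(D_2)$, so $\|v\|_2\le\Delta_2 f=:\Delta$. For an output $o=f(D_1)+Z$, the privacy loss is
\begin{equation*}
L(o)=\ln\frac{p_{D_1}(o)}{p_{D_2}(o)}=\frac{\|Z+v\|_2^2-\|Z\|_2^2}{2\sigma^2}=\frac{2\langle Z,v\rangle+\|v\|_2^2}{2\sigma^2}.
\end{equation*}
By rotational invariance of the spherical Gaussian, $\langle Z,v\rangle\sim N(0,\sigma^2\|v\|_2^2)$. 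This reduces the problem to one dimension, and we may take the worst case $\|v\|_2=\Delta$.

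The core step is to show $\Pr[|L|>\epsilon]\le\delta$, and it suffices to bound $\Pr[L>\epsilon]$. Write $\langle Z,v\rangle=\sigma\Delta X$ with $X\sim N(0,1)$. Then $L>\epsilon$ iff $X>\frac{\sigma\epsilon}{\Delta}-\frac{\Delta}{2\sigma}=:t$. Apply the Gaussian tail bound $\Pr[X>t]\le\frac{1}{t\sqrt{2\pi}}e^{-t^2/2}$. We substitute $\sigma=c\Delta/\epsilon$ with $c=\sqrt{2\ln(1.25/\delta)}$, which gives $t=c-\epsilon/(2c)$.

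The main obstacle is the calculation showing that this tail is at most $\delta$ (in Dwork--Roth one actually wants $\le\delta/2$, though the one-sided event already suffices). This is where $\epsilon<1$ and the constant $1.25$ enter. First, expand $t^2/2\ge \frac{c^2}{2}-\frac{\epsilon}{2}$, so
\begin{equation*}
e^{-t^2/2}\le e^{\epsilon/2}\cdot\frac{\delta}{1.25}.
\end{equation*}
Next, use $\epsilon<1$ together with $\delta$ small (so $c\ge 3/2$, say) to bound the prefactor. This yields $\frac{e^{\epsilon/2}}{1.25\,t\sqrt{2\pi}}\le 1$, and the log inequality $\ln(1.25/\delta)\ge\ldots$ closes the bound. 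I would follow the Dwork--Roth Appendix A computation here, which is exactly where the condition $c^2>2\ln(1.25/\delta)$ originates.

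Finally, convert the loss bound into the DP inequality. Split $\mathbb{R}^k$ into $G=\{o: L(o)\le\epsilon\}$ and its complement. For any $O$,
\begin{equation*}
\Pr[\mathcal{A}(D_1)\in O]\le\Pr[\mathcal{A}(D_1)\in O\cap G]+\Pr[L>\epsilon]\le e^\epsilon\Pr[\mathcal{A}(D_2)\in O]+\delta.
\end{equation*}
The pointwise density ratio bound on $G$ gives the first term. Symmetry in $D_1,D_2$ completes the argument.
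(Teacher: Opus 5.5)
Your proof is correct and follows the standard Dwork--Roth Appendix~A argument: privacy-loss variable, rotation to one dimension, the Mills-ratio tail bound, and the good-set conversion. The paper gives no proof of its own here and simply cites that result; your one-sided bound with budget $\delta$, together with the direct estimate $e^{-t^2/2}\le e^{\epsilon/2}\delta/1.25$, already completes the computation, so no further log inequality is needed. The one loose end is your tacit hypothesis $c\ge 3/2$ (roughly $\delta\le 0.4$), which the statement does not impose; it is easily removed by a case split: for $\delta\le 0.54$ one still has $c>1.29$, hence $t>0.9$, and your prefactor estimate holds, while for $0.54<\delta<1$ the crude bound $t> c-1/(2c)>-0.09$ gives $\Pr[X>t]<0.54<\delta$.
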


\begin{theorem}[The Analytic Gaussian Mechanism~\cite{balle2018improving}]
\label{thm:agm}
The classical bound above is loose, particularly in the high-privacy regime (small $\epsilon$).~\cite{balle2018improving} derived the necessary and sufficient conditions for Gaussian noise to satisfy DP using the exact cumulative distribution function (CDF) of the normal distribution, $\Phi$. A Gaussian mechanism with noise scale $\sigma$ is $(\epsilon, \delta)$-DP if and only if:
\begin{align*}
    \Phi\left(\frac{\Delta_2}{2\sigma} - \frac{\epsilon\sigma}{\Delta_2}\right) - e^{\epsilon}\Phi\left(-\frac{\Delta_2}{2\sigma} - \frac{\epsilon\sigma}{\Delta_2}\right) \leq \delta
\end{align*}
\end{theorem}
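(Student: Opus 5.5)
This is a characterization, so I would prove both directions through the privacy loss random variable. Fix adjacent databases $D_1,D_2$ and write $\Delta=f(D_1)-f(D_2)$. Then the output distributions are $P=N(f(D_1),\sigma^2 I_k)$ and $Q=N(f(D_2),\sigma^2 I_k)$. Because the Gaussian is rotation invariant, the whole problem reduces to one dimension: the direction $\Delta/\|\Delta\|_2$. We need the worst case over pairs, and I expect it to occur at $\|\Delta\|_2=\Delta_2$, so I would first reduce to the pair $N(0,\sigma^2)$ versus $N(\Delta_2,\sigma^2)$.

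\textbf{Step 1: exact form of $(\epsilon,\delta)$-DP.} For a pair $(P,Q)$ the inequality $P(O)\le e^\epsilon Q(O)+\delta$ must hold for every $O$. This is equivalent to
\[
\sup_O \bigl(P(O)-e^\epsilon Q(O)\bigr)\le\delta .
\]
The supremum is attained at $O^*=\{y: p(y)>e^\epsilon q(y)\}$, a standard Neyman--Pearson argument. So DP holds if and only if the ``hockey-stick divergence'' $P(O^*)-e^\epsilon Q(O^*)$ is at most $\delta$, for every adjacent pair and in both orders. Since the Gaussian pair is symmetric, both orders give the same value.

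\textbf{Step 2: compute it.} The privacy loss is $L(y)=\ln(p(y)/q(y))$. In the one-dimensional reduction, $L$ is affine in $y$: under $P$ it is distributed as $N(\eta,2\eta)$ with $\eta=\|\Delta\|^2/(2\sigma^2)$, and under $Q$ as $N(-\eta,2\eta)$. The set $O^*$ is $\{L>\epsilon\}$. Standardizing with $s=\|\Delta\|/\sigma$ gives
\[
P(O^*)=\Phi\!\left(\frac{s}{2}-\frac{\epsilon}{s}\right),\qquad
Q(O^*)=\Phi\!\left(-\frac{s}{2}-\frac{\epsilon}{s}\right).
\]
This yields exactly the expression in the theorem, with $\Delta_2$ replaced by $\|\Delta\|$.

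\textbf{Step 3: monotonicity in $\|\Delta\|$.} This is the step I expect to be the main obstacle, because it is what makes the worst case occur at $\|\Delta\|=\Delta_2$. Define
\[
g(s)=\Phi\!\left(\frac{s}{2}-\frac{\epsilon}{s}\right)-e^\epsilon\Phi\!\left(-\frac{s}{2}-\frac{\epsilon}{s}\right).
\]
I will show $g$ is increasing in $s>0$. Differentiating, and using the identity $e^\epsilon\phi\bigl(-\tfrac{s}{2}-\tfrac{\epsilon}{s}\bigr)=\phi\bigl(\tfrac{s}{2}-\tfrac{\epsilon}{s}\bigr)$ (a direct exponent check), the derivative simplifies to $\phi\bigl(\tfrac{s}{2}-\tfrac{\epsilon}{s}\bigr)>0$. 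As a cleaner alternative, I could argue monotonicity by post-processing: the pair at a smaller shift is obtained from the pair at a larger shift by adding independent noise, and hockey-stick divergences do not increase under post-processing.

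\textbf{Conclusion.} The supremum over adjacent pairs of the divergence equals $g(\Delta_2/\sigma)$, attained by any pair achieving the sensitivity, with a limiting argument if the sensitivity is only approached. For \emph{sufficiency}, if $g(\Delta_2/\sigma)\le\delta$ then every pair satisfies the DP inequality. For \emph{necessity}, if the mechanism is $(\epsilon,\delta)$-DP, then applying the DP inequality to $O^*$ for the extremal pair forces $g(\Delta_2/\sigma)\le\delta$.
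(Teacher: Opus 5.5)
The paper does not prove this statement; it imports it from Balle and Wang (2018). Your proposal is correct and follows essentially the standard argument from that reference: the exact hockey-stick/privacy-loss characterization of $(\epsilon,\delta)$-DP, the Gaussian privacy loss $N(\eta,2\eta)$ with $\eta=\|\Delta\|^2/(2\sigma^2)$, and monotonicity in $\|\Delta\|$ so that the worst case sits at the sensitivity. Your derivative computation also checks out. The squared arguments $(s/2+\epsilon/s)^2$ and $(s/2-\epsilon/s)^2$ differ by exactly $2\epsilon$, which gives $g'(s)=\phi(s/2-\epsilon/s)>0$. Since the paper defines $\Delta_2$ as a maximum, the extremal pair exists and necessity needs no limiting argument.
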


LAPRAS employs this Analytic Gaussian Mechanism (AGM). By numerically inverting this inequality, we can find the smallest possible $\sigma$ for a given $(\epsilon, \delta)$, thereby maximizing utility (minimizing error) for a fixed privacy budget. 

\begin{theorem}[Matrix Mechanism~\cite{10.1145/1807085.1807104}]
    Given an $m \times n$ workload matrix $W$, a $p \times n$ strategy matrix $\mathbf{A}$ that supports $W$ and a differentially private algorithm $\mathcal{K}\left(\mathbf{A}, x\right)$ that answers $\mathbf{A}$ with a given database $x$. The matrix mechanism $\mathcal{M}_{\mathcal{K}, \mathbf{A}}$ outputs the following vector:
    \begin{align*}
        \mathcal{M}_{\mathcal{K}, \mathbf{A}} = W\mathbf{A}^+\mathcal{K}\left(\mathbf{A}, x\right)
    \end{align*}
\end{theorem}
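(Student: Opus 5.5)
The statement to prove is really a privacy claim: the Matrix Mechanism output $\mathcal{M}_{\mathcal{K},\mathbf{A}} = W\mathbf{A}^+\mathcal{K}(\mathbf{A},x)$ inherits the differential privacy guarantee of $\mathcal{K}$. It also carries a correctness claim: because $\mathbf{A}$ supports $W$, the output is an unbiased estimate of $Wx$ with an explicit error. I would prove these two parts separately. Privacy will follow from post-processing. Correctness will follow from a linear-algebra identity involving the pseudoinverse.

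\emph{Privacy.} Suppose $\mathcal{K}(\mathbf{A},\cdot)$ is $(\epsilon,\delta)$-DP. For instance, it could be the Gaussian mechanism of \cref{thm:gaussian} or the AGM of \cref{thm:agm}, calibrated to the $L_2$ sensitivity of $x\mapsto \mathbf{A}x$, which is the maximum column norm $\|\mathbf{A}\|_{1\to 2}$ under add/remove-one-record neighbors.
\begin{itemize}
\item The map $y\mapsto W\mathbf{A}^+y$ is a fixed deterministic function. It depends only on the public matrices $W$ and $\mathbf{A}$, not on $x$.
\item For any measurable output set $O$, its preimage $O' = \{y : W\mathbf{A}^+y\in O\}$ is measurable.
\item Hence $P[\mathcal{M}(D_1)\in O] = P[\mathcal{K}(D_1)\in O'] \le e^\epsilon P[\mathcal{K}(D_2)\in O'] + \delta = e^\epsilon P[\mathcal{M}(D_2)\in O]+\delta$.
\end{itemize}
This is the standard post-processing argument, applied directly to the definition of $(\epsilon,\delta)$-DP.

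\emph{Correctness and error.} I would interpret "supports" as $W = W\mathbf{A}^+\mathbf{A}$, or equivalently that the row space of $W$ lies in the row space of $\mathbf{A}$.
\begin{itemize}
\item Write $\mathcal{K}(\mathbf{A},x) = \mathbf{A}x + z$ with $\mathbb{E}[z]=0$ and $\mathrm{Cov}(z)=\sigma^2 I$.
\item Then $W\mathbf{A}^+\mathcal{K} = W\mathbf{A}^+\mathbf{A}x + W\mathbf{A}^+z = Wx + W\mathbf{A}^+z$.
\item So the estimate is unbiased, and its total squared error is $\sigma^2\|W\mathbf{A}^+\|_F^2$.
\end{itemize}
With $\sigma$ chosen for sensitivity $\|\mathbf{A}\|_{1\to2}$, the total squared error becomes $\sigma_{\epsilon,\delta}^2\|\mathbf{A}\|_{1\to 2}^2\|W\mathbf{A}^+\|_F^2$. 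This is the quantity that the offline-optimal strategy minimizes.

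The main obstacle is the support condition. I would invoke the pseudoinverse property that $\mathbf{A}^+\mathbf{A}$ is the orthogonal projector onto the row space of $\mathbf{A}$. If $W$'s rows lie in that space, each row $w$ satisfies $w\mathbf{A}^+\mathbf{A}=w$, so $W\mathbf{A}^+\mathbf{A}=W$. The privacy part involves no difficulty beyond checking that $W$ and $\mathbf{A}$ are data-independent.
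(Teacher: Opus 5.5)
Your argument is correct, but the paper has no proof to compare it against. The statement is just the definition of the Matrix Mechanism imported from Li et al., followed by a remark that $\mathbf{A}$ is chosen by an SDP and answers are reconstructed via $\mathbf{A}^{+}$. Strictly, it asserts no property, so reading it as a privacy-plus-correctness claim is your interpretation. It is the right one, and it supplies exactly the two facts the paper later uses without comment:
\begin{itemize}
\item Your post-processing step is what justifies charging the precomputation phase only $(\epsilon_{\mathrm{MM}},\delta_i)$ in the Privacy Guarantee theorem, and serving predicted queries at no further cost.
\item Your identity $W\mathbf{A}^{+}\mathbf{A}=W$, which follows because $\mathbf{A}^{+}\mathbf{A}$ is the orthogonal projector onto the row space of $\mathbf{A}$, gives unbiasedness and total squared error $\sigma^2\|W\mathbf{A}^{+}\|_F^2$. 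This is what identifies $\sum_{q}\mathbb{E}[U_{\mathrm{MM}}(q)^2]$ in the Utility Guarantee with the quantity the SDP minimizes.
\end{itemize}
Your formula also makes that theorem's loosely worded monotonicity step precise. If $P\subseteq\mathcal{S}$, an optimal strategy $\mathbf{A}^{*}$ for $W_{\mathcal{S}}$ also supports $W_P$. Since $W_P$ is a row-submatrix of $W_{\mathcal{S}}$, we have $\|W_P(\mathbf{A}^{*})^{+}\|_F\le\|W_{\mathcal{S}}(\mathbf{A}^{*})^{+}\|_F$. Hence the optimum for $P$ is no worse, at equal privacy parameters.

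Two small points are worth making explicit. First, your error formula needs $\mathcal{K}(\mathbf{A},x)=\mathbf{A}x+z$ with zero-mean isotropic $z$. That is an assumption you add, since the statement allows any DP algorithm $\mathcal{K}$; unbiasedness alone needs only $\mathbb{E}[z]=0$. Second, the privacy part uses only that $W$ and $\mathbf{A}$ are independent of $x$, not the support condition. In LAPRAS this holds because $\mathbf{A}$ is computed from $W_P$, and $P$ comes from the oracle applied to the query set rather than the data. The argument would break if either the oracle or the strategy optimization looked at $x$.
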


The mechanism selects the optimal strategy matrix $\mathbf{A}$ by solving a convex optimization problem, typically formulated as a Semidefinite Program (SDP), which minimizes the variance of the reconstructed answers subject to a constraint on the sensitivity of $\mathbf{A}$, and derives estimates for the original workload using the Moore-Penrose pseudoinverse $\mathbf{A}^+$.

\begin{theorem}[Basic Composition \cite{dwork2014foundations}]
Let $\mathcal{M}_1,\mathcal{M}_2, \dots, \mathcal{M}_k : \mathcal{X}^n \rightarrow \mathcal{Y}$ be randomized algorithms. Suppose $\mathcal{M}_j$ is $\left(\epsilon_j, \delta_j\right)$-DP for each $j \in [k]$. Define $\mathcal{M} : \mathcal{X}^n \rightarrow \mathcal{Y}$ by $\mathcal{M}\left(x\right)= \left(\mathcal{M}_1\left(x\right),\mathcal{M}_2\left(x\right), \dots, \mathcal{M}_k\left(x\right) \right)$, where each algorithm is run independently. Then $\mathcal{M}$ is $\left(\epsilon, \delta\right)$-DP where $\epsilon = \sum_{j=1}^k\epsilon_j$, and $\delta = \sum_{j=1}^k\delta_j$. 
\end{theorem}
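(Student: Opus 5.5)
The statement to prove is Basic Composition. I will argue directly from the definition of $(\epsilon,\delta)$-DP, by induction on $k$. The case $k=1$ is immediate, so the work is in showing that composing an $(\epsilon_1,\delta_1)$-DP mechanism with an independent $(\epsilon_2,\delta_2)$-DP mechanism yields an $(\epsilon_1+\epsilon_2,\delta_1+\delta_2)$-DP mechanism. Applying this repeatedly to $(\mathcal{M}_1,\dots,\mathcal{M}_{j})$ and $\mathcal{M}_{j+1}$ then gives the sums $\epsilon=\sum_j\epsilon_j$ and $\delta=\sum_j\delta_j$.

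For the two-mechanism step, fix adjacent databases $D_1,D_2$ and a measurable set $O\subseteq \mathcal{Y}\times\mathcal{Y}$. For each $y_1$, write the section $O_{y_1}=\{y_2:(y_1,y_2)\in O\}$. Independence lets me write
\[
P[(\mathcal{M}_1(D_1),\mathcal{M}_2(D_1))\in O]=\E_{y_1\sim\mathcal{M}_1(D_1)}\big[P[\mathcal{M}_2(D_1)\in O_{y_1}]\big].
\]
Applying the DP guarantee of $\mathcal{M}_2$ pointwise in $y_1$, and using that a probability never exceeds $1$, the inner probability is at most $g(y_1):=\min\{1,\,e^{\epsilon_2}P[\mathcal{M}_2(D_2)\in O_{y_1}]\}+\delta_2$.

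It remains to transfer the outer expectation from $\mathcal{M}_1(D_1)$ to $\mathcal{M}_1(D_2)$. The function $h(y_1)=\min\{1,e^{\epsilon_2}P[\mathcal{M}_2(D_2)\in O_{y_1}]\}$ takes values in $[0,1]$, so I can use the layer-cake formula:
\[
\E_{D_1}[h]=\int_0^1 P[h(\mathcal{M}_1(D_1))>t]\,dt\le\int_0^1\big(e^{\epsilon_1}P[h(\mathcal{M}_1(D_2))>t]+\delta_1\big)dt=e^{\epsilon_1}\E_{D_2}[h]+\delta_1.
\]
The inequality applies the DP guarantee of $\mathcal{M}_1$ to each level set $\{h>t\}$. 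Then, using $h\le e^{\epsilon_2}P[\mathcal{M}_2(D_2)\in O_{y_1}]$ and independence again, $e^{\epsilon_1}\E_{D_2}[h]\le e^{\epsilon_1+\epsilon_2}P[(\mathcal{M}_1(D_2),\mathcal{M}_2(D_2))\in O]$. Adding the $\delta_2$ term from the previous paragraph gives the bound with $\delta_1+\delta_2$.

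I expect the main obstacle to be this transfer step for a general, non-rectangular $O$ when $\delta>0$. The naive route multiplies the densities of the two mechanisms, which is only valid in the pure $\delta=0$ case. The truncation $\min\{1,\cdot\}$ is what keeps the integrand in $[0,1]$, so that the $\delta_1$ slack is paid only once rather than scaled by $e^{\epsilon_2}$. Measurability of $y_1\mapsto P[\mathcal{M}_2(D_2)\in O_{y_1}]$ follows from Fubini–Tonelli, since the product measure is well defined by independence.
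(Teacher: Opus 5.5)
The paper gives no proof of this statement; it quotes it from Dwork--Roth as a black-box tool, so there is no in-paper argument to compare yours against. Your proof is correct and complete. The inequality $\min\{1,e^{\epsilon_2}p+\delta_2\}\le\min\{1,e^{\epsilon_2}p\}+\delta_2$ holds. The layer-cake transfer is valid because $h\in[0,1]$ and each level set $\{h>t\}$ is measurable. Together with the final Fubini step, these give the bound $e^{\epsilon_1+\epsilon_2}P[\,\cdot\in O\,]+\delta_1+\delta_2$. The two-mechanism step works unchanged when the first factor's output space is $\mathcal{Y}^j$, so the induction goes through.

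Your proof has the same skeleton as the standard one in the cited source:
\begin{itemize}
\item condition on the first output;
\item apply the guarantee of $\mathcal{M}_2$ section by section;
\item truncate at $1$ so that $\delta_2$ is paid only once;
\item move the outer expectation from $\mathcal{M}_1(D_1)$ to $\mathcal{M}_1(D_2)$.
\end{itemize}
The real difference is the last step. The textbook argument bounds $P_{D_1}(dy_1)\le e^{\epsilon_1}P_{D_2}(dy_1)+\mu(dy_1)$, where $\mu$ is the positive part of the signed measure $P_{D_1}-e^{\epsilon_1}P_{D_2}$, which has total mass at most $\delta_1$. Your layer-cake step reaches the same conclusion for $[0,1]$-valued integrands using only the set-wise DP inequality on level sets, so you need no Hahn--Jordan decomposition.

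Both arguments use the guarantee of $\mathcal{M}_2$ only pointwise in $y_1$. Yours therefore extends verbatim to adaptive composition, where $\mathcal{M}_2$ may depend on $y_1$, given the obvious kernel measurability. That is slightly stronger than the stated theorem. It is also the form actually needed if, as the paper's privacy proof says, budget updates may depend on previously released outputs.
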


\begin{theorem}[Post-Processing Immunity \cite{dwork2014foundations}]
Let $\mathcal{A}$ be an $(\epsilon, \delta)$-differentially private algorithm. Let $g$ be an arbitrary randomized or deterministic function that takes the output of $\mathcal{A}$ as input. The algorithm $g(\mathcal{A}(D))$ is also $(\epsilon, \delta)$-differentially private.
\end{theorem}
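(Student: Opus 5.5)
Since Post-Processing Immunity is stated for an arbitrary, possibly randomized, map $g$, I will first prove it for deterministic $g$ and then reduce the randomized case to the deterministic one by conditioning on $g$'s internal randomness. Throughout I fix adjacent databases $D_1, D_2$ and use only the definition of $(\epsilon,\delta)$-Differential Privacy stated above.

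\textbf{Deterministic $g$.} Let $g:\mathrm{Range}(\mathcal{A})\to\mathcal{Z}$ be deterministic and fix an arbitrary (measurable) output set $O\subseteq\mathcal{Z}$. Set $O' = g^{-1}(O)=\{y\in\mathrm{Range}(\mathcal{A}): g(y)\in O\}$. The event $g(\mathcal{A}(D))\in O$ is exactly the event $\mathcal{A}(D)\in O'$. Applying the DP inequality for $\mathcal{A}$ to the set $O'$ gives
\begin{align*}
P[g(\mathcal{A}(D_1))\in O] = P[\mathcal{A}(D_1)\in O'] \le e^{\epsilon}P[\mathcal{A}(D_2)\in O'] + \delta = e^{\epsilon}P[g(\mathcal{A}(D_2))\in O]+\delta .
\end{align*}
Since $O$ and the adjacent pair were arbitrary, $g\circ\mathcal{A}$ is $(\epsilon,\delta)$-DP.

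\textbf{Randomized $g$.} Write $g(y)=h(y,R)$, where $h$ is deterministic and $R$ is a random seed drawn independently of both $\mathcal{A}$'s coins and the database. For each fixed value $r$, the map $h_r = h(\cdot,r)$ is deterministic, so the previous step gives
\begin{align*}
P[h_r(\mathcal{A}(D_1))\in O]\le e^{\epsilon}P[h_r(\mathcal{A}(D_2))\in O]+\delta .
\end{align*}
Because $R$ is independent of $\mathcal{A}(D)$, $P[g(\mathcal{A}(D))\in O]=\mathbb{E}_R\bigl[P[h_R(\mathcal{A}(D))\in O]\bigr]$. Taking $\mathbb{E}_R$ of both sides of the inequality preserves it by linearity and monotonicity of expectation. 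The constant $\delta$ is unchanged and $e^{\epsilon}$ factors out, which yields the claim.

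\textbf{Main obstacle.} The only delicate point is measure-theoretic. We need $g^{-1}(O)$ to be an admissible event for $\mathcal{A}$, and we need $r\mapsto P[h_r(\mathcal{A}(D))\in O]$ to be measurable so that the averaging step is valid. I would handle this by assuming $g$ (respectively $h$) is jointly measurable, which is the standard convention and covers every post-processing step used later in the paper. Examples are the reconstruction $W\mathbf{A}^+(\cdot)$ in the Matrix Mechanism and the reuse of cached answers for predicted queries. With that assumption, Fubini's theorem justifies the exchange of expectation and probability, and the argument is otherwise direct.
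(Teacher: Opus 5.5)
Your proof is correct and is essentially the standard argument from Dwork and Roth, which the paper cites without reproducing: you handle deterministic $g$ through the preimage $g^{-1}(O)$, then treat randomized $g$ as a mixture of deterministic maps and average the inequality over the independent seed. The paper gives no proof of its own for this background fact, so there is nothing further to compare.
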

\begin{restatable}{definition}{problemdef}[Private Online Query Aswering]
\label{def:la-online-dp}
Let $x\in\mathbb{R}^n$ be a private data vector (e.g., derived from a database of size $N$), and let
$\mathrm{Q}=(q_1,\dots,q_S)$ be a stream of $S$ linear counting queries arriving in a uniformly random order.
An untrusted oracle, based on prior logs or learned workload structure, outputs a prediction set \predictedset{} that may overlap with the stream. The goal is to design an online mechanism that, upon receiving each $q_t$, outputs an answer $a_t$ immediately and uses \predictedset{} to adapt its per-query privacy spending $(\epsilon_t,\delta_t)$ so that the overall interaction is $(\epsilon,\delta)$-differentially private, while minimizing the resulting error of the released answers (e.g., MAE) relative to prediction-oblivious online baselines.

\end{restatable}

\section{LAPRAS}
\label{sec:our-algo}


\begin{algorithm}
\footnotesize
\footnotesize\caption{\footnotesize LAPRAS}
\label{alg:batchDP}
\begin{algorithmic}[1]
\footnotesize
\STATE \textbf{Input:} Query set \(Q\) of size $S$, global privacy budget \(\epsilon, \delta\), budget strategy $\textit{strat}$, \textit{isSmooth}, $x$ database
\STATE \(T \gets \lceil\log^2 S\rceil\),  \(\delta_i \gets \frac{\delta}{S + 1}\)
\IF{\textit{isSmooth}}
    \STATE $\epsilon_{\mathrm{MM}},\epsilon_{\mathrm{bad}},\epsilon_{\mathrm{reserve}} \gets \text{splitBudget}\left(\epsilon, \textit{strat}\right)$ 
\ELSE
    \STATE $\epsilon_{\mathrm{MM}},\epsilon_{\mathrm{badInit}},\epsilon_{\mathrm{remBad}},\epsilon_{\mathrm{reserve}} \gets \text{splitBudget}\left(\epsilon, \textit{strat}\right)$
\ENDIF
\STATE Define minimal reserve threshold \(\epsilon_{\min}>0\)
\STATE \(P \gets \texttt{Oracle}(Q)\)\quad\textit{// Predicted good queries}
\STATE \(A \gets \texttt{MatrixMechanism}(P,x,\epsilon_{\text{MM}}, \delta_i)\)
\STATE \(b \gets 0,\quad n \gets 0,\quad B_{\text{est}}\gets 0,\quad \hat{B} \gets 0,\quad \text{remBad} \gets 0\)
\FOR{each query \(q\in Q\)}
  \STATE \(n \gets n + 1\) \quad\textit{// Total queries seen}
  \IF{\(q\in P\)}
    \STATE $ans[n] \gets A(q)$ \label{line:returnMM}
  \ELSE
    \STATE \(b \gets b + 1\)
    \IF{\(b \le T\)}
      \IF{$b = 1$}
        \STATE $\hat{B} \gets S - n + 1$
       \ELSE
         \STATE $\hat{B} \gets S \cdot \frac{(b - 1)}{(n-1)}$
     \ENDIF 
     
      \IF{\textit{isSmooth}}
        
        \STATE $\epsilon_b \gets \frac{\epsilon_{\mathrm{bad}}}{\max (1, \hat{B} - b) + 1}$
        \STATE \(ans[n] \gets \texttt{AGM}\bigl(q, x, \epsilon_b, \delta_i\bigr)\)
        \STATE $\epsilon_{\mathrm{bad}} \gets \epsilon_{\mathrm{bad}} - \epsilon_b$
      \ELSE
          \STATE $\text{Reallocate}(\epsilon_{\mathrm{reserve}}, \epsilon_{\mathrm{initBad}})$
          \STATE \(ans[n] \gets \texttt{AGM}\bigl(q,x,\frac{\epsilon_{\mathrm{initBad}}}{T}, \delta_i\bigr)\)
          \STATE $ \epsilon_{\mathrm{badInit}} \gets \epsilon_{\mathrm{badInit}} - \bigl(\frac{\epsilon_{\mathrm{badInit}}}{T}\bigr)$
      \ENDIF
      \IF{$b = T$}
            \STATE \(B_{\text{est}} \gets S \cdot \tfrac{(T-1)}{(n-1)}\)\quad\textit{// Estimate total bad queries}
            \STATE $\hat{B} \gets B_{\text{est}}$
            \STATE $\text{remBad} \gets \max \left(B_{\text{est}} - T, 1\right)$
            \IF{not $\textit{isSmooth}$}
                \STATE $\epsilon_{\text{remBad}} \mathrel{+}= \epsilon_{\text{badInit}}$ \quad\textit{// Add any left over budget}
            \ENDIF
      \ENDIF 
    \ELSE
      \IF{\(b \le B_{\text{est}}\)}
        \IF{\textit{isSmooth}}
            \STATE $\epsilon_b \gets \frac{\epsilon_{\mathrm{bad}}}{\max (1, \hat{B} - b) + 1}$
            \STATE \(ans[n] \gets \texttt{AGM}\bigl(q, x, \epsilon_b, \delta_i\bigr)\)
            \STATE $\epsilon_{\mathrm{bad}} \gets \epsilon_{\mathrm{bad}} - \epsilon_b$
        \ELSE
            \STATE \(ans[n] \gets \texttt{AGM}\bigl(q,x,\frac{\epsilon_{\mathrm{remBad}}}{\mathrm{remBad}}, \delta_i\bigr)\)
            \STATE $ \epsilon_{\text{remBad}} \gets \epsilon_{\mathrm{remBad}} - \bigl(\frac{\epsilon_{\mathrm{remBad}}}{\mathrm{remBad}}\bigr)$
        \ENDIF
      \ELSE
        \IF{\(\epsilon_{\mathrm{reserve}} < \epsilon_{\min}\)}
          \STATE \textbf{Stop answering further bad queries.}
          \STATE \textbf{break}
        \ELSE
          \STATE \(ans[n] \gets \texttt{AGM}\bigl(q,x, \frac{\epsilon_{\mathrm{reserve}}}{2}, \delta_i\bigr)\)
          \STATE \(\epsilon_{\mathrm{reserve}}  \gets \frac{\epsilon_{\mathrm{reserve}}}{2}\)
        \ENDIF
      \ENDIF
    \ENDIF
    \STATE \textbf{Output} \(ans\)
  \ENDIF
\ENDFOR
\end{algorithmic}
\end{algorithm}

We present \textbf{LAPRAS}, a learning-augmented framework that tackles the problem of private online query answering~(\cref{def:la-online-dp}) by using the oracle’s prediction set $\mathrm{P}$ to guide privacy spending. Given $\mathrm{P}$, LAPRAS classifies each arriving query $q_t$ as \emph{good} if $q_t\in\mathrm{P}$ and \emph{bad} otherwise; we write $G:=|\{t:q_t\in\mathrm{P}\}|$ and $B:=S-G$ for the number of good and bad queries, respectively. At a high level, LAPRAS handles the good portion with a workload-aware release that exploits correlations among predicted queries (Matrix Mechanism~\citep{10.1145/1807085.1807104}), producing low-noise information that can be served when good queries occur with no additional privacy cost at query time. Bad queries, in contrast, must be answered online by adding independent noise, drawing from a residual budget.

The main challenge is \textbf{budget pacing}: the mechanism must allocate a finite residual budget across an \emph{unknown} number of bad queries $B$. LAPRAS addresses this using the random-order assumption. Let $T=\lceil \log^2 S\rceil$. 
When the $T$-th bad query appears at position $n$ in the stream, we form the stopping-time estimator $B_{\mathrm{est}} = S\cdot\frac{T-1}{n-1}$, which is unbiased for $B$. This estimate drives our allocation strategies, allowing LAPRAS to calibrate per-query noise to the realized number of bad queries rather than budgeting pessimistically against $S$. We give the full pseudocode in~\cref{alg:batchDP}.

\subsection{Privacy Budget Allocation}
\label{subsec:budget-strategy}
We partition the global budget $\epsilon$ into four components: $\epsilon_{\mathrm{MM}}$ for the matrix mechanism on the predicted set; $\epsilon_{\mathrm{badInit}}$ for the warm-up phase of bad queries; $\epsilon_{\mathrm{remBad}}$ for the remaining bad-query stream; and $\epsilon_{\mathrm{reserve}}$ as a safety buffer. We also define a threshold $\epsilon_{\min}>0$; if $\epsilon_{\mathrm{reserve}}<\epsilon_{\min}$, the mechanism halts to avoid privacy violations. We evaluate four allocation strategies~(\cref{tab:strategies}) to study trade-offs between predicted-query accuracy, pacing on bad queries, and overall robustness, clarifying how the budget split shapes the privacy–utility trade-off. We set $\delta_i=\delta/(S+1)$.
\begin{table}[!ht]
\vspace{-1.0em}
\centering
\footnotesize
\caption{\footnotesize Budget split strategies used in our experiments.}
\label{tab:strategies}
\renewcommand{\arraystretch}{1.1}
\setlength{\tabcolsep}{6pt}
\begin{tabular}{lcccc}
\toprule
\textbf{Name} & $\epsilon_{\mathrm{MM}}$ & $\epsilon_{\mathrm{badInit}}$ & $\epsilon_{\mathrm{bad}}$ & $\epsilon_{\mathrm{reserve}}$ \\
\midrule
equal          & $\epsilon/4$ & $\epsilon/4$ & $\epsilon/4$ & $\epsilon/4$ \\
matrix-heavy   & $\epsilon/2$ & $\epsilon/6$ & $\epsilon/6$ & $\epsilon/6$ \\
query-heavy    & $\epsilon/6$ & $\epsilon/3$ & $\epsilon/3$ & $\epsilon/6$ \\
reserve-heavy  & $\epsilon/6$ & $\epsilon/6$ & $\epsilon/6$ & $\epsilon/2$ \\
\bottomrule
\end{tabular}
\vspace{-2.0em}
\end{table}

\subsubsection{Static Allocation}
\label{sec:static-allocation}

We apply the Matrix Mechanism~\citep{10.1145/1807085.1807104} to the oracle’s predicted set $\mathrm{P}$ using budget $\epsilon_{\mathrm{MM}}$ and store these precomputed answers. For the first $T=\lceil\log^2 S\rceil$ bad queries $(b\leq T)$, we answer with independent noise using per-query budget $\epsilon_{\mathrm{badInit}}/T$, expending $\epsilon_{\mathrm{badInit}}$ in total. At $b=T$, we compute an unbiased streaming estimate $B_{\mathrm{est}}$ of the total number of bad queries; thereafter, each bad query with $b\le B_{\mathrm{est}}$ is answered using $\epsilon_{\mathrm{bad}}/(B_{\mathrm{est}}-T)$, expending $\epsilon_{\mathrm{bad}}$ overall. If $b>B_{\mathrm{est}}$, we draw from a reserve budget $\epsilon_{\mathrm{reserve}}$ that is halved after each use and stop answering once it falls below $\epsilon_{\min}$. While robust, Static Allocation is rigid; it commits to a single density estimate early in the stream, leading to suboptimal utility if the initial distribution of bad queries is an anomaly (e.g., a dense cluster), or if we have $B < T$ bad queries, we never have an estimate and thus waste a significant portion of the budget. 

\subsubsection{Smooth Allocation}
\label{sec:smooth-allocation}
To mitigate the rigidity of the static approach, we introduce Smooth Allocation, a control-theoretic method that continuously recalibrates the privacy spend. Rather than dividing the budget, this strategy combines it into a single active pool, $\epsilon_{\mathrm{pool}} = \epsilon_{\mathrm{badInit}} + \epsilon_{\mathrm{remBad}}$. The core idea is to treat the estimation of $B$ as a dynamic process that improves as more of the stream is observed.
Let $n_b$ denote the position in the stream where the $b$-th bad query arrives. For every bad query $b \ge 2$, we compute an instantaneous estimate $\widehat{B}(b) = S \cdot \frac{b-1}{n_b-1}$. When answering the $b$-th bad query, the mechanism utilizes the estimate, $\widehat{B}(b)$, to project the number of remaining bad queries: $\widehat{B}_{\mathrm{rem},b} = \max(1, \widehat{B}(b) - b)$. The privacy budget $\epsilon_b$ for the current query is then derived by equidistributing the remaining pool $\epsilon_{\mathrm{rem}, b-1}$ over this projection:
$$\epsilon_b = \frac{\epsilon_{\mathrm{rem}, b-1}}{\widehat{B}_{\mathrm{rem},b} + 1}, \quad \text{where} \quad \epsilon_{\mathrm{rem}, b} = \epsilon_{\mathrm{rem}, b-1} - \epsilon_b.$$

The $+1$ term regularizes early updates to avoid overspending. If bad queries are sparse, $\epsilon_b$ increases; if dense, $\epsilon_b$ decreases. After the $T$-th bad query we fix an unbiased estimate and use it for pacing, while $\epsilon_{\mathrm{reserve}}$ preserves worst-case robustness.

\subsection{Theoretical Analysis}

We defer some proofs to the~\cref{sec:appendixTheory} due to space constraints.

\begin{definition}[Negative Hypergeometric Distribution]
   Let $N$ be the size of a finite population containing exactly $K$ successes and the $N-K$ failures. Suppose we draw items \emph{without replacement}, until $r$ failures are encountered. Define the random variable 
   \begin{align*}
       Y = \text{the number of successes until we see $r$ failures} 
   \end{align*}
   Then \(Y\) is said to have a \emph{negative hypergeometric} distribution with parameters \((N,K,r)\), denoted by \[Y\sim\nhg\left(N,K,r\right)\]
\end{definition}

\begin{theorem}
\label{thm:expectation}
    Given a stream of $S$ queries, in random order, out of which exactly $B \ge T$ are bad, and $G = S-B$ are good. Fix $T = \lceil\log^2(S)\rceil$, and let $L$ be the number of queries until we see $T$ bad queries. Then, we get an unbiased estimator for the number of bad queries:
    \[
        \hat{B} = \frac{S\left(T-1\right)}{L-1},
    \]
    where 
    \[
        \mathbf{E}\bigl[\hat{B}\bigr] = B
    \]
\end{theorem}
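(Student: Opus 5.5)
The plan is to compute $\mathbf{E}[\hat B]$ directly from the exact distribution of $L$, using the fact that under a uniformly random order the set of positions occupied by the $B$ bad queries is a uniformly random $B$-subset of $\{1,\dots,S\}$. Equivalently, $L-T\sim\nhg(S,G,T)$ by the definition above: $L-T$ counts the good queries seen before the $T$-th bad one. It is cleaner to work with positions, though. I will assume $T\ge 2$, which holds since $T=\lceil\log^2 S\rceil\ge 2$ for all but trivially small $S$. Then $L\ge T\ge 2$, so $L-1\ge 1$ and $\hat B$ is well defined.

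\textbf{Step 1 (law of $L$).} The event $\{L=l\}$ for $T\le l\le S-B+T$ occurs exactly when three things hold: position $l$ is bad, exactly $T-1$ of the first $l-1$ positions are bad, and the remaining $B-T$ bad positions lie among the last $S-l$. Hence
\[
\Pr[L=l]=\frac{\binom{l-1}{T-1}\binom{S-l}{B-T}}{\binom{S}{B}}.
\]

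\textbf{Step 2 (absorbing the factor $\tfrac{T-1}{l-1}$).} Use the identity $\frac{T-1}{l-1}\binom{l-1}{T-1}=\binom{l-2}{T-2}$. This gives
\[
\mathbf{E}\Bigl[\tfrac{T-1}{L-1}\Bigr]=\frac{1}{\binom{S}{B}}\sum_{l}\binom{l-2}{T-2}\binom{S-l}{B-T}.
\]

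\textbf{Step 3 (the combinatorial sum).} This is the only step needing an idea, and I expect it to be the main obstacle. I will evaluate the sum by a counting argument rather than by algebra. Consider choosing a $(B-1)$-subset of the $S-1$ positions $\{2,\dots,S\}$, and classify each subset by the position $l$ of its $(T-1)$-th smallest element. The count of subsets with that element at $l$ is as follows: $T-2$ elements are chosen from $\{2,\dots,l-1\}$, giving $\binom{l-2}{T-2}$ ways, and $B-T$ elements are chosen from $\{l+1,\dots,S\}$, giving $\binom{S-l}{B-T}$ ways. This is valid because $B-1\ge T-1$, i.e., $B\ge T$, which is exactly the theorem's hypothesis. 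Summing over $l$ counts every subset once, so the sum equals $\binom{S-1}{B-1}$; this is a Vandermonde-type identity.

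\textbf{Step 4 (conclusion).} Therefore
\[
\mathbf{E}\Bigl[\tfrac{T-1}{L-1}\Bigr]=\binom{S-1}{B-1}\Big/\binom{S}{B}=\frac{B}{S},
\]
and multiplying by the constant $S$ gives $\mathbf{E}[\hat B]=B$. This is the classical unbiasedness of inverse (negative hypergeometric) sampling. The only care needed is checking the summation range $l\in[T,\,S-B+T]$, which matches the range on which both binomial coefficients are nonzero.
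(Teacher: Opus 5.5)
Your proof is correct and follows the paper's argument: the paper sums $\frac{T-1}{L-1}$ against the same negative hypergeometric p.m.f. (indexed by $g=L-T$ rather than $l$), absorbs the factor into $\binom{g+T-2}{g}$, and evaluates the sum as $\binom{S-1}{G}=\binom{S-1}{B-1}$ via Chu--Vandermonde, giving $B/S$. The only difference is that you prove the Vandermonde-type identity with a counting argument where the paper cites it, and your explicit $T\ge 2$ caveat is one the paper also needs implicitly, since its lemma requires $1\le k<T$.
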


\begin{proof}
    Let $Y$ be a random variable that denotes the number of good queries in the first $L$ queries, i.e., the number of good queries before we see $T$ bad queries. Then since it's a random order stream, we have that:
    \[
        Y \sim \nhg\bigl(S, G, T\bigr),
    \]
    We have that:
    \[
        L = Y + T
    \]
    Now, we want to show that $\mathbb{E}\bigl[\hat{B}\bigr] = B$. Let
    \begin{align*}
        D &= \frac{T-1}{L-1} = \frac{T-1}{Y+T-1}
    \end{align*}
    Now, using~\cref{lemma:identity} (for $k=1$), we have $\mathbb{E}\bigl[D\bigr]= \frac{B}{S}$:
    Using this, we have the following:
    \begin{align*}
        \mathbb{E}\bigl[\hat{B}\bigr] = S\times\mathbb{E}\bigl[\frac{T-1}{L-1}\bigr] = S \times \frac{B}{S}  = B\\
    \end{align*}
\end{proof}

\begin{restatable}{theorem}{variance}
\label{thm:sharper_variance}
For a fixed integer $T \ge 3$, the variance of the estimator $\hat{B}$ is bounded by:
    \[
        \Var[\hat{B}] < \frac{S(T-1)}{(S-1)(T-2)}B(B-1) - B^2
    \]
    where $B$ is the number of bad queries and $S$ is the total number of queries.
\end{restatable}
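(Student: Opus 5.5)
We use $\Var[\hat B]=\mathbb{E}[\hat B^2]-B^2$, with $\mathbb{E}[\hat B]=B$ from \cref{thm:expectation}, so the claim reduces to
\[
\mathbb{E}\bigl[\hat B^2\bigr] < \frac{S(T-1)}{(S-1)(T-2)}B(B-1).
\]
Writing $\hat B = S\cdot\frac{T-1}{L-1}$ with $L=Y+T$ and $Y\sim\nhg(S,G,T)$, this amounts to bounding
\[
\mathbb{E}\Bigl[\frac{(T-1)^2}{(L-1)^2}\Bigr] < \frac{(T-1)B(B-1)}{S(S-1)(T-2)}.
\]

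The key step is a pointwise comparison that replaces the square by a falling-factorial-type product, which is the quantity the identity (\cref{lemma:identity}) handles for $k=2$. For $L\ge T\ge 3$,
\[
\frac{(T-1)^2}{(L-1)^2} = \frac{T-1}{T-2}\cdot\frac{(T-1)(T-2)}{(L-1)^2} < \frac{T-1}{T-2}\cdot\frac{(T-1)(T-2)}{(L-1)(L-2)},
\]
since $(L-1)^2>(L-1)(L-2)$ when $L\ge 3$. The inequality is strict, which gives the strict $<$ in the statement.

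It remains to compute $\mathbb{E}\bigl[\frac{(T-1)(T-2)}{(L-1)(L-2)}\bigr]$. We would invoke \cref{lemma:identity} with $k=2$, which we expect to read
\[
\mathbb{E}\Bigl[\frac{(T-1)\cdots(T-k)}{(L-1)\cdots(L-k)}\Bigr] = \frac{B(B-1)\cdots(B-k+1)}{S(S-1)\cdots(S-k+1)},
\]
the $k=1$ case being exactly what \cref{thm:expectation} used. If the lemma were only stated for $k=1$, we would prove the $k=2$ case directly. The NHG pmf gives $P(L=\ell)=\binom{\ell-1}{T-1}\binom{S-\ell}{B-T}/\binom{S}{B}$. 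Using
\[
\frac{(T-1)(T-2)}{(\ell-1)(\ell-2)}\binom{\ell-1}{T-1}=\binom{\ell-3}{T-3},
\]
the sum becomes $\sum_\ell \binom{\ell-3}{T-3}\binom{S-\ell}{B-T}/\binom{S}{B}$. By the Vandermonde-type identity this equals $\binom{S-2}{B-2}/\binom{S}{B}=\frac{B(B-1)}{S(S-1)}$. The identity is the hypothesis $T\ge3$ at work: it is a total-probability statement for an NHG with population $S-2$, $B-2$ bad items and $T-2$ stopping failures.

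Combining, $\mathbb{E}[\hat B^2] < S^2\cdot\frac{T-1}{T-2}\cdot\frac{B(B-1)}{S(S-1)}$, and subtracting $B^2$ gives the bound. The main obstacle is getting the index shift in the combinatorial identity right: checking the range of $\ell$ and that $B\ge T$ makes the binomials well defined. The rest is algebra.
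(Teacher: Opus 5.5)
Your proof is correct and follows the paper's argument essentially step for step: you use the same pointwise bound $\frac{1}{(L-1)^2} < \frac{1}{(L-1)(L-2)}$ for $L \ge T \ge 3$, combined with the $k=2$ case of the inverse falling-factorial identity (\cref{lemma:identity}), $\mathbb{E}\bigl[\frac{(T-1)(T-2)}{(L-1)(L-2)}\bigr] = \frac{B(B-1)}{S(S-1)}$. Your optional direct derivation of that identity, as the total probability of an $\nhg$ with population, bad count and stopping count each reduced by $2$, is the paper's Chu--Vandermonde step recast probabilistically.
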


\begin{restatable}{corollary}{asymvaraince}
    With $T = \lceil\log^2(S)\rceil$, the variance of the estimator $\hat{B}$ has the following asymptotic upper bound:
    \[
        \Var[\hat{B}] = O\left(\frac{B^2}{\log^2(S)}\right)
    \]
\end{restatable}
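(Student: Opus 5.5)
The plan is to start from \cref{thm:sharper_variance} and simplify its right-hand side algebraically. We write
\[
\frac{S(T-1)}{(S-1)(T-2)}B(B-1) - B^2
= B^2\left[\frac{S(T-1)(B-1)}{(S-1)(T-2)B} - 1\right].
\]
We then drop the factor $(B-1)/B\le 1$ and the factor $S/(S-1)$ where convenient. This gives two terms to control: a main term of order $B^2/(T-2)$ coming from $(T-1)/(T-2) = 1 + 1/(T-2)$, and correction terms from $S/(S-1)$.

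Concretely, the bracket equals
\[
\frac{S(T-1)(B-1) - (S-1)(T-2)B}{(S-1)(T-2)B}.
\]
Expanding the numerator gives $SB - S(T-1) + (T-2)B$. Since $B\le S$, this is at most
\[
SB + (T-2)B \le SB + (T-2)S = S(B+T-2).
\]
The bracket is therefore bounded by
\[
\frac{S}{S-1}\cdot\frac{B+T-2}{(T-2)B} \le 2\left(\frac{1}{T-2}+\frac{1}{B}\right),
\]
and so
\[
\Var[\hat B] \le 2\left(\frac{B^2}{T-2} + B\right).
\]

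The remaining step, which is the main obstacle, is absorbing the additive $B$ term into $O(B^2/\log^2 S)$. This works whenever $B \ge T-2$. The running assumption of \cref{thm:expectation} is $B\ge T=\lceil \log^2 S\rceil$, so $B \le B^2/(T-2)$ in that regime. Finally, since $T-2 = \Theta(\log^2 S)$ for $S$ large enough that $T\ge 3$, we conclude
\[
\Var[\hat B]\le \frac{4B^2}{T-2} = O\!\left(\frac{B^2}{\log^2 S}\right).
\]

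I expect the only delicacy to be making explicit that the corollary inherits the hypotheses $B\ge T\ge 3$ from the preceding theorems. Without $B\ge T$, the linear term $B$ genuinely dominates and the stated bound would fail.
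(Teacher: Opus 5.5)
Your proof is correct. Like the paper, you simplify the right-hand side of the sharper variance theorem, but you group the terms differently, and that grouping creates work the paper avoids.

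The paper writes the bound as $\bigl(\frac{S(T-1)}{(S-1)(T-2)}-1\bigr)B^2-\frac{S(T-1)}{(S-1)(T-2)}B$. It observes that the linear term enters with a minus sign and can be discarded. What remains is the single coefficient $\frac{S+T-2}{(S-1)(T-2)}\sim 1/\log^2 S$.

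Your additive $+B$ comes from one step, $(T-2)B\le (T-2)S$. If you instead keep $SB+(T-2)B=(S+T-2)B$, your bracket is at most $\frac{S+T-2}{(S-1)(T-2)}$. This is at most $\frac{2}{T-2}$ once $T\le S$, which holds for large $S$ and also follows from $T\le B\le S$. That gives $\Var[\hat B]<2B^2/(T-2)$ directly, with no absorption step and a better constant.

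So the step you call the ``main obstacle'' is self-inflicted. Your closing remark, that without $B\ge T$ the linear term genuinely dominates, is not right: in the theorem's bound that term is negative. You do still need $B\ge T$, but for a different reason. It ensures the $T$-th bad query exists, so $L$ and $\hat B$ are defined and the falling-factorial identity applies.

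What your version buys over the paper's is an explicit non-asymptotic inequality rather than an argument phrased with $\sim$ and $\approx$. The paper's grouping, made precise with $T\le S$, gives the same kind of explicit bound more cleanly.
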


\begin{restatable}{lemma}{budgetsoundness}
\label{lemma:budget_soundness}
Let $\epsilon_{\mathrm{pool}}$ be the initial budget allocated to the Smooth Allocation strategy and $B$ be the total count of bad queries. The total budget expended satisfies:
\begin{equation}
    \sum_{i=2}^{B} \epsilon_i < \epsilon_{\mathrm{pool}}
\end{equation}
\end{restatable}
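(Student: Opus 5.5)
The plan is to track the remaining pool, not the individual terms $\epsilon_i$. The update rule makes the pool telescope exactly. Set $\epsilon_{\mathrm{rem},1} := \epsilon_{\mathrm{pool}}$; this is the pool in force when the first priced bad query ($b=2$) arrives. For every $b \ge 2$ the rule is
\[
\epsilon_b = \frac{\epsilon_{\mathrm{rem},b-1}}{\widehat{B}_{\mathrm{rem},b}+1}, \qquad \epsilon_{\mathrm{rem},b} = \epsilon_{\mathrm{rem},b-1}-\epsilon_b ,
\]
so
\[
\epsilon_{\mathrm{rem},b} = \epsilon_{\mathrm{rem},b-1}\Bigl(1-\frac{1}{\widehat{B}_{\mathrm{rem},b}+1}\Bigr).
\]
The first step is to observe that the projection is clipped from below: $\widehat{B}_{\mathrm{rem},b} = \max\bigl(1,\widehat{B}(b)-b\bigr) \ge 1$. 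Hence the denominator is at least $2$, and the multiplicative factor $\lambda_b := \widehat{B}_{\mathrm{rem},b}/(\widehat{B}_{\mathrm{rem},b}+1)$ lies in $[1/2,1)$. The bound holds for every realization of the random estimates $\widehat{B}(b)$, so no probabilistic input (such as \cref{thm:expectation}) is needed.

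The second step is induction on $b$, showing $0 < \epsilon_{\mathrm{rem},b} \le \epsilon_{\mathrm{pool}}$. Positivity is preserved because $\lambda_b > 0$. Each $\epsilon_b$ is then a strictly positive fraction, at most one half, of a strictly positive remaining pool. Unrolling gives the closed form
\[
\epsilon_{\mathrm{rem},B} = \epsilon_{\mathrm{pool}} \prod_{b=2}^{B}\lambda_b > 0 .
\]
Summing the update relation from $b=2$ to $B$ telescopes:
\[
\sum_{b=2}^{B}\epsilon_b = \epsilon_{\mathrm{rem},1}-\epsilon_{\mathrm{rem},B} = \epsilon_{\mathrm{pool}}-\epsilon_{\mathrm{rem},B} < \epsilon_{\mathrm{pool}},
\]
and the inequality is strict because every $\lambda_b$ is nonzero. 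The same conclusion holds however the stream is truncated, since the bound is uniform in the number of bad queries answered.

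The main obstacle is the bookkeeping, not the inequality itself. First, one has to check that the pseudocode really matches the rule above. Algorithm~\ref{alg:batchDP} uses $\epsilon_{\mathrm{bad}}$ as the live pool, and the $\hat{B}$ used at $b=1$ is $S-n+1$. I would argue that the indexing in the lemma starts at $i=2$ (the first query with a genuine estimate), with $\epsilon_{\mathrm{pool}}$ being the pool at that point. If the $b=1$ query is instead counted in $\epsilon_{\mathrm{pool}}$, the same telescoping works starting from $b=1$, since $\max(1,\cdot)+1 \ge 2$ there too.

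Second, one has to confirm that freezing $\hat{B}$ at $B_{\mathrm{est}}$ after $b=T$ does not change the form of the update. It only changes the value of $\widehat{B}_{\mathrm{rem},b}$, which is still clipped to be at least $1$, so the argument goes through.

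Third, queries beyond $B_{\mathrm{est}}$ draw on $\epsilon_{\mathrm{reserve}}$, not the pool, so they contribute nothing to the sum. In that case the sum over pool draws is only a partial telescoping sum, which is still strictly below $\epsilon_{\mathrm{pool}}$.
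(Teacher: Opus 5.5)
Your proof is correct and is essentially the paper's argument: both write the remaining pool as $\epsilon_{\mathrm{pool}}\prod_{b=2}^{B}\lambda_b$, where each factor lies in $(0,1)$ because $\widehat{B}_{\mathrm{rem},b}\ge 1$, and conclude that the amount spent, $\epsilon_{\mathrm{pool}}-\epsilon_{\mathrm{rem},B}$, is strictly below $\epsilon_{\mathrm{pool}}$. Your extra bookkeeping adds clarifications the paper leaves implicit: the $b=1$ query, which the pseudocode does charge to the pool even though the lemma's sum starts at $i=2$; the estimate being frozen after $b=T$; and overflow queries being charged to the reserve.
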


\begin{theorem}[Privacy Guarantee]
\label{thm:privacy-guarantee}
The described algorithm is $(\epsilon,\delta)$-differentially private.
\end{theorem}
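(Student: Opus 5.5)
The plan is to view the whole interaction as an adaptive composition of independent mechanisms, each run with a budget that depends only on public or previously released information, and then to apply Basic Composition together with Post-Processing Immunity. First, the Matrix Mechanism release $A$ on the predicted set $\mathrm{P}$ is $(\epsilon_{\mathrm{MM}},\delta_i)$-DP. The strategy answers $\mathcal{K}(\mathbf{A},x)$ are computed by the AGM calibrated to the sensitivity of $\mathbf{A}$. Multiplying by $W\mathbf{A}^+$, and later serving $A(q)$ for every $q\in\mathrm{P}$ (line~\ref{line:returnMM}), is post-processing, so good queries cost nothing further. The oracle output $\mathrm{P}$ is assumed to depend only on the query stream and prior logs, not on $x$. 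Likewise, the classification of each arriving query as good or bad, the counters $b,n$, and hence $\hat{B}$, $B_{\mathrm{est}}$, $\mathrm{remBad}$, use only query positions and never the data. So every per-query budget $\epsilon_b$ is a deterministic function of the public stream. This means the composition is really non-adaptive with respect to the data, although adaptive composition would also suffice.

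Second, each bad query that is answered is released by the AGM with parameters $(\epsilon_b,\delta_i)$. Since the query has $L_2$ sensitivity $1$, Theorem~\ref{thm:agm} makes this release $(\epsilon_b,\delta_i)$-DP. The $\epsilon$-costs are then bounded pool by pool.
\begin{itemize}
\item \textbf{Smooth mode.} The warm-up and later phases draw from a single pool. Lemma~\ref{lemma:budget_soundness} gives $\sum_b \epsilon_b<\epsilon_{\mathrm{pool}}$. The first bad query should be covered by the same telescoping, since each step spends a fraction at most $1/2$ of the remainder.
\item \textbf{Static mode.} The warm-up spends $\epsilon_{\mathrm{badInit}}/T$ repeatedly from a decreasing remainder, so its total is at most $\epsilon_{\mathrm{badInit}}$. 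Any leftover is moved into $\epsilon_{\mathrm{remBad}}$. The next phase spends $\epsilon_{\mathrm{remBad}}/\mathrm{remBad}$ each time from the remainder, a geometric-type depletion that never exceeds the pool.
\item \textbf{Reserve.} Each use spends half of what remains, so the total is at most $\epsilon_{\mathrm{reserve}}(1-2^{-k})<\epsilon_{\mathrm{reserve}}$. Halting at $\epsilon_{\min}$ only reduces spending.
\end{itemize}

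Summing these with Basic Composition gives a total $\epsilon$ of at most $\epsilon_{\mathrm{MM}}+\epsilon_{\mathrm{badInit}}+\epsilon_{\mathrm{remBad}}+\epsilon_{\mathrm{reserve}}=\epsilon$, by the splitBudget strategies of Table~\ref{tab:strategies}. For $\delta$, at most one MM release and at most $S$ AGM releases occur, each costing $\delta_i=\delta/(S+1)$, so the total is at most $(S+1)\delta_i=\delta$.

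The main obstacle is justifying composition when the per-query budgets and the number of releases are random, since they depend on the query order and on $B$. I would handle this by conditioning on the query stream and the oracle output: because these are independent of $x$, the schedule $(\epsilon_b)$ is then fixed. Composition applies for each fixed schedule, and the bound is uniform over all schedules. A secondary subtlety is confirming that the pool remainders never go negative, for example in the static branch when $\mathrm{remBad}$ is underestimated. This is exactly why each step spends only a fraction of the current remainder rather than a fixed amount.
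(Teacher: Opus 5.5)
Your proposal is correct and follows essentially the same route as the paper. Both arguments apply basic composition over the single $(\epsilon_{\mathrm{MM}},\delta_i)$ Matrix Mechanism release and at most $S$ AGM releases, and both bound the $\epsilon$-cost pool by pool: budget soundness for Smooth, fraction-of-remainder depletion for Static, and geometric halving for the reserve. Both then use $(S+1)\delta_i=\delta$ and treat the good-query answers and all bookkeeping as post-processing.

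Your version is, if anything, slightly more careful than the paper's brief appeal to post-processing. You observe that the budget schedule depends only on query positions and the data-independent set $\mathrm{P}$, so conditioning on the stream makes the composition non-adaptive. You also cover the $b=1$ term that the budget-soundness lemma's sum omits.
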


\begin{proof}
We apply post-processing and basic composition for $(\epsilon,\delta)$-DP.

\textbf{Budget split.}
The algorithm partitions the total privacy budget as
\(
\epsilon = \epsilon_{\mathrm{MM}}+\epsilon_{\mathrm{pool}}+\epsilon_{\mathrm{reserve}}.
\)

\textbf{Per-release $\delta$ accounting.}
Let $K$ upper bound the number of randomized database accesses. LAPRAS answers a stream of $S$ queries with at most one randomized release per query, and performs one additional offline release for the predicted set; thus $K=S+1$. We set $\delta_i := \frac{\delta}{K} = \frac{\delta}{S+1}$.
Every database access in the algorithm is implemented as an $(\epsilon_j,\delta_i)$-DP mechanism (with the appropriate $\epsilon_j$ drawn from the corresponding budget component).

\textbf{Precomputation phase.}
The algorithm runs the Matrix Mechanism on the predicted workload $P$ with parameters $(\epsilon_{\mathrm{MM}},\delta_i)$, incurring privacy loss $(\epsilon_{\mathrm{MM}},\delta_i)$.

\textbf{Unpredicted-query phase.}
Each unpredicted query answer is produced by adding independent noise calibrated to $(\epsilon_i,\delta_i)$.
For Static Allocation, the total $\epsilon$ spent across these answers is exactly
\(\epsilon_{\mathrm{badInit}}+\epsilon_{\mathrm{bad}}=\epsilon_{\mathrm{pool}}\).
For Smooth Allocation, by budget soundness,
\(\sum_i \epsilon_i \le \epsilon_{\mathrm{pool}}\). If overflow occurs, additional answers are produced using per-query parameters $(\epsilon_i,\delta_i)$ drawn from the reserve.
The geometric decay rule ensures \(\sum_i \epsilon_i \le \epsilon_{\mathrm{reserve}}\).
In either case, the total $\epsilon$ spent in this phase is at most $\epsilon_{\mathrm{pool}}$, and the number of releases is at most $S$, so the composed privacy loss is at most
\((\epsilon_{\mathrm{pool}} + \epsilon_{\mathrm{reserve}}, S\delta_i)\). 

\textbf{Post-processing.}
All other computations (membership checks in $P$, the estimator $\widehat{B}$, and budget updates) depend only on public information and previously released DP outputs. By post-processing, they incur no additional privacy loss.

\textbf{Composition.}
By basic composition over the precomputation release and at most $S$ online releases, the overall privacy loss is
\begin{align*}
\left(\epsilon_{\mathrm{MM}}+\epsilon_{\mathrm{pool}}+\epsilon_{\mathrm{reserve}},\; \delta_0 + S\delta_0\right)
&=
\left(\epsilon,\; (S+1)\delta_0\right)\\
&=
(\epsilon,\delta)
\end{align*}
Therefore, the algorithm satisfies $(\epsilon,\delta)$-differential privacy.
\end{proof}

{





Next we give the theoretical utility guarantee of our algorithm in terms of the number of 
bad queries $B$ in the input stream and the utility of the offline Matrix Mechanism algorithm
given all of the queries in the stream and the online naive independent noise algorithm. For any query $q$, let $U_{\mathrm{MM}}(q)=|\hat{a}-a|$ be a random variable indicating absolute error of the Matrix Mechanism on $q$, where $a$ is the true answer and $\hat{a}$ is the released (noisy) answer. Define $U_{O}(q)$ and $U_{\textsc{lapras}}(q)$ analogously as the errors of the online baseline and LAPRAS, respectively.



\begin{theorem}[Utility Guarantee]
Given a total of $M$ queries and $B$ bad queries in the stream $\mathcal{S}$ and where $P \subseteq \mathcal{S}$,
for any query $q \in \mathcal{S}$, it holds that 

\begin{align*}
    \sum_{q \in \mathcal{S}} \expect[U_{LAPRAS}(q)^2] &\leq   \frac{c B^3 \ln(1/\delta)}{\eps^2} + c\sum_{q \in \mathcal{S}} \expect[U_{MM}(q)^2] \\
    & \leq d \cdot \sum_{q \in \mathcal{S}} \expect[U_{O}(q)^2],
\end{align*}

for sufficiently large fixed constants $c, d \geq 1$.
\end{theorem}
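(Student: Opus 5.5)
We split the stream into the good queries ($q\in P$) and the bad queries ($q\notin P$), and bound each part's contribution to $\sum_{q}\expect[U_{\textsc{lapras}}(q)^2]$ separately. Then we compare the resulting bound with the online baseline.

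\textbf{Good queries.} Every good query is answered by post-processing the Matrix Mechanism release on $P$ computed with $(\epsilon_{\mathrm{MM}},\delta_i)$. All budget splits in \cref{tab:strategies} give $\epsilon_{\mathrm{MM}}\ge \epsilon/6$, and $\delta_i=\delta/(S+1)$. The noise scale of the Gaussian/AGM mechanism is therefore inflated by at most a constant factor in $\epsilon$ and by a $\sqrt{\ln((S+1)/\delta)}$ factor in $\delta$. We treat the latter as absorbed into $c$ under the standing regime $\ln S = O(\ln(1/\delta))$, e.g.\ $\delta\le 1/S$. Since $P\subseteq\mathcal{S}$, the optimal strategy for the full workload $\mathcal{S}$ restricted to $P$ is feasible for the SDP on $P$. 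So the optimal per-query variance on $P$ is at most that incurred on those rows by the Matrix Mechanism run on all of $\mathcal{S}$. This gives $\sum_{q\in P}\expect[U_{\textsc{lapras}}(q)^2]\le c\sum_{q\in\mathcal{S}}\expect[U_{\mathrm{MM}}(q)^2]$.

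\textbf{Bad queries.} Each bad query $b$ receives AGM noise with variance $O(\ln(1/\delta_i)/\epsilon_b^2)$, so we need $\sum_b \expect[1/\epsilon_b^2]$. The plan is to show that every answered bad query gets $\epsilon_b=\Omega(\epsilon/B)$ with constant probability, and otherwise $\epsilon_b=\Omega(\epsilon/B^{3/2})$ or so. Concretely:
\begin{itemize}
\item In Smooth Allocation, $\epsilon_b=\epsilon_{\mathrm{rem},b-1}/(\max(1,\widehat B(b)-b)+1)$. While $\widehat B(b)\le 2B$, this is at least $\epsilon_{\mathrm{rem},b-1}/(2B+1)$.
\item The pool stays at a constant fraction of $\epsilon_{\mathrm{pool}}$ while only $O(B)$ allocations have been made. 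This follows from the product form $\epsilon_{\mathrm{rem},b}=\epsilon_{\mathrm{pool}}\prod(1-1/(\widehat B_{\mathrm{rem}}+1))$ together with the budget-soundness lemma (\cref{lemma:budget_soundness}).
\item The event $\widehat B\le 2B$ is controlled by Chebyshev with the variance corollary $\Var[\hat B]=O(B^2/\log^2 S)$. It fails with probability $O(1/\log^2 S)$.
\item On the failure event, and for the reserve phase, we use the crude bound $\epsilon_b\ge \epsilon/(6S)$ or the geometric reserve bound.
\end{itemize}
Summing over the $B$ bad queries gives $O(B\cdot B^2\ln(1/\delta)/\epsilon^2)=O(B^3\ln(1/\delta)/\epsilon^2)$. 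The stated $B^3$ is looser than the $B^2$ an exact $\epsilon/B$ allocation would give, and that slack is what lets us absorb the estimation failures and the $\max(1,\cdot)+1$ regularization.

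\textbf{Comparison with the online baseline.} The independent-noise baseline must allocate without knowledge of $B$ and answer all $S\ge B$ queries, spending at most $\epsilon/S$ per query. Its total error is therefore $\Omega(S\cdot S^2\ln(1/\delta)/\epsilon^2)\ge\Omega(B^3\ln(1/\delta)/\epsilon^2)$. Also, $\sum_q\expect[U_{\mathrm{MM}}(q)^2]\le \sum_q\expect[U_O(q)^2]$, because independent noise is one feasible strategy (the identity on the queries) for the Matrix Mechanism. Combining the two terms yields the second inequality with $d=2c$.

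The main obstacle is the bad-query step: controlling $\expect[1/\epsilon_b^2]$ when $\widehat B$ badly overestimates $B$ early, i.e.\ when $n_b$ is small. The quantity $1/\epsilon_b^2$ is not bounded by the variance of $\hat B$ alone. I would first try conditioning on the stopping time and bounding the worst case via $\widehat B\le S$, which holds deterministically since $b-1\le n_b-1$. I would then check that the failure probability $O(1/\log^2 S)$ times $S^2$ is still dominated by the other terms. If it is not, I would fall back to stating the bound with the extra factor absorbed into $c$.
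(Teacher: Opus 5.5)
You have the paper's skeleton: split $\mathcal{S}$ into $P$ and $\mathcal{S}\setminus P$, and note that the full-workload strategy is feasible for the SDP on $P$. Then bound the bad queries by $B\sigma^2$ with $\sigma\propto B\sqrt{\ln(1/\delta)}/\epsilon$, and compare with a baseline of scale $\propto M\sqrt{\ln(1/\delta)}/\epsilon$. The good-query and baseline parts are fine, and your remark about the $\ln((S+1)/\delta)$ factor coming from $\delta_i$ is a real point the paper glosses over.

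The paper handles the bad queries by simply asserting that every one of them gets noise of scale $c_1B\sqrt{\ln(1/\delta)}/\epsilon$. Your attempt to justify that assertion is where the proposal breaks. First, the slack you plan to spend does not exist. An exact $\epsilon/B$ allocation gives per-query variance $\Theta(B^2\ln(1/\delta)/\epsilon^2)$, so the total is $\Theta(B^3\ln(1/\delta)/\epsilon^2)$. That is precisely the stated bound, so no spare factor of $B$ is left to absorb estimation failures, the $\max(1,\cdot)+1$ regularization, or the reserve.

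Second, the claim that $\epsilon_b=\Omega(\epsilon/B)$ except on an event of probability $O(1/\log^2 S)$ is false for this algorithm. The bad events are neither rare nor polynomially bounded.
\begin{itemize}
\item The variance corollary covers only the frozen estimate $B_{\mathrm{est}}$ formed at $b=T$. The warm-up uses far noisier estimates. For $b=1$, Smooth Allocation uses $\widehat B=S-n_1+1$, which is at least $S/2$ except with probability about $2^{-B}$. Hence $\epsilon_1=O(\epsilon/S)$, and that one query contributes $\Omega(S^2\ln(1/\delta_i)/\epsilon^2)$. This is not $O(B^3\ln(1/\delta)/\epsilon^2)$ once $B=o(S^{2/3})$.
\item More seriously, the dangerous direction is \emph{under}estimation, which your Chebyshev event $\widehat B\le 2B$ does not control. $B_{\mathrm{est}}$ is unbiased with relative spread of order $1/\log S$ (for $B/S$ bounded away from $1$). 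So with constant probability it undershoots by $\Omega(B/\log S)$.
\item The bad queries with $b>B_{\mathrm{est}}$ then go to the reserve. There the $j$-th gets $\epsilon_{\mathrm{reserve}}2^{-j}$, so its variance grows like $4^j$, until $\epsilon_{\mathrm{reserve}}<\epsilon_{\min}$ and answering stops entirely. Likewise, whenever $\widehat B_{\mathrm{rem}}=1$ the smooth rule halves the pool.
\end{itemize}
So the crude bound $\epsilon_b\ge\epsilon/(6S)$ is false; budget soundness only says the remainder stays positive. Your fallback of absorbing the excess into $c$ also fails, because the excess factors ($S^2/B^2$, or $4^{\Theta(B/\log S)}$) are not constants.

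Even on the favorable event, ``the pool stays a constant fraction of $\epsilon_{\mathrm{pool}}$'' is false by design. Once $\widehat B$ is frozen, the rule telescopes to a constant spend $\epsilon_{\mathrm{rem},T}/(B_{\mathrm{est}}-T)$ for $T<b\le B_{\mathrm{est}}$, and the pool drains to zero. Pairing your lower bound $\epsilon_{\mathrm{rem},b-1}/(2B+1)$ with the actual pool size only gives $O(B^4\ln(1/\delta)/\epsilon^2)$. The telescoping identity is the right tool for the frozen phase, but the warm-up and reserve phases still break the bound. Along this route you can recover the statement only by adding hypotheses that make the idealized $\Theta(\epsilon/B)$-per-query allocation hold. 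That is exactly what the paper's proof assumes without saying so.
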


\begin{proof}
    LAPRAS runs the matrix mechanism on all queries in $P$ while the offline matrix mechanism runs on all queries in $\mathcal{S}$. It follows that 
    \begin{align*}
        \sum_{q \in \mathcal{S}} \expect[U_{LAPRAS}(q)^2]
    &= \sum_{q \in P}\expect[U_{LAPRAS}(q)^2]\\ 
    &+ \sum_{q \in \mathcal{S} \setminus P}\expect[U_{LAPRAS}(q)^2],
    \end{align*}
    by definition of our algorithm.
    Since $P \subseteq \mathcal{S}$, the optimization problem solved in the offline matrix mechanism contains all of the constraints
    solved by the matrix mechanism on $P$; thus, the error of the strategy matrix returned by LAPRAS is upper bounded by the strategy
    matrix returned by the offline matrix mechanism, and we have 
    that $\sum_{q \in P} \expect[U_{MM}(q)^2] \leq \sum_{q \in \mathcal{S}} \expect[U_{MM}(q)^2]$. Hence, we can conclude 
    that $\sum_{q \in P}\expect[U_{LAPRAS}(q)^2] \leq c\sum_{q \in P}\expect[U_{MM}(q)^2] \leq c\sum_{q \in \mathcal{S}}\expect[U_{MM}(q)^2]$ for some sufficiently 
    large fixed constant $c > 0$ where $c$ comes from our 
    division of $\eps$ in LAPRAS.

    What remains is the bound on $\sum_{q \in \mathcal{S} \setminus P} \expect[U_{LAPRAS}^2]$. We can upper bound
    this quantity by the sum of the variances of the 
    independent noises for each of the $B$ bad queries drawn 
    from the appropriate Gaussian distribution. Hence, 
    $\sum_{q \in \mathcal{S} \setminus P} \expect[U_{LAPRAS}^2] \leq B \cdot \sigma^2$ where $\sigma^2$ is the variance of 
    the Gaussian distribution we are drawing noises from. 

    By~\cref{thm:gaussian} and~\cref{alg:batchDP}, the $\sigma$
    of the distribution we use for the noise added to the bad 
    queries is $\frac{c_1 \cdot B \cdot \sqrt{\ln(1/\delta)}}{\epsilon}$ for some fixed constant $c_1 > 0$. 
    Hence, we conclude that 
    
    \begin{align*}
        \sum_{q \in \mathcal{S} \setminus P} \expect[U_{LAPRAS}^2] &\leq B \cdot \sigma^2\\
        &\leq \frac{c_1 B^3 \ln(1/\delta)}{\eps^2}.
    \end{align*}

    Finally, since the offline naive algorithm draws independent 
    noises from the Gaussian distribution with 
    $\sigma = \frac{c_2M \sqrt{\ln(1/\delta)}}{\eps}$ for a
    fixed constant $c_2 > 0$, 
    the expected squared error of the queries answered by both 
    the matrix mechanism and using Gaussian noise are upper bounded by the error given by drawing noise from this Gaussian distribution. Hence, for a sufficiently large fixed 
    constant $d > 0$, it holds that 
    \begin{align*}
        \frac{c B^3 \ln(1/\delta)}{\eps^2} + c\sum_{q \in \mathcal{S}} \expect[U_{MM}(q)^2]
        \leq d \cdot \sum_{q \in \mathcal{S}} \expect[U_{O}(q)^2].
    \end{align*}
\end{proof}

}
\section{Experimental Evaluation}

\begin{figure*}[t]
    \centering

    \begin{subfigure}[t]{0.48\textwidth}
        \centering
        \includegraphics[width=\linewidth]{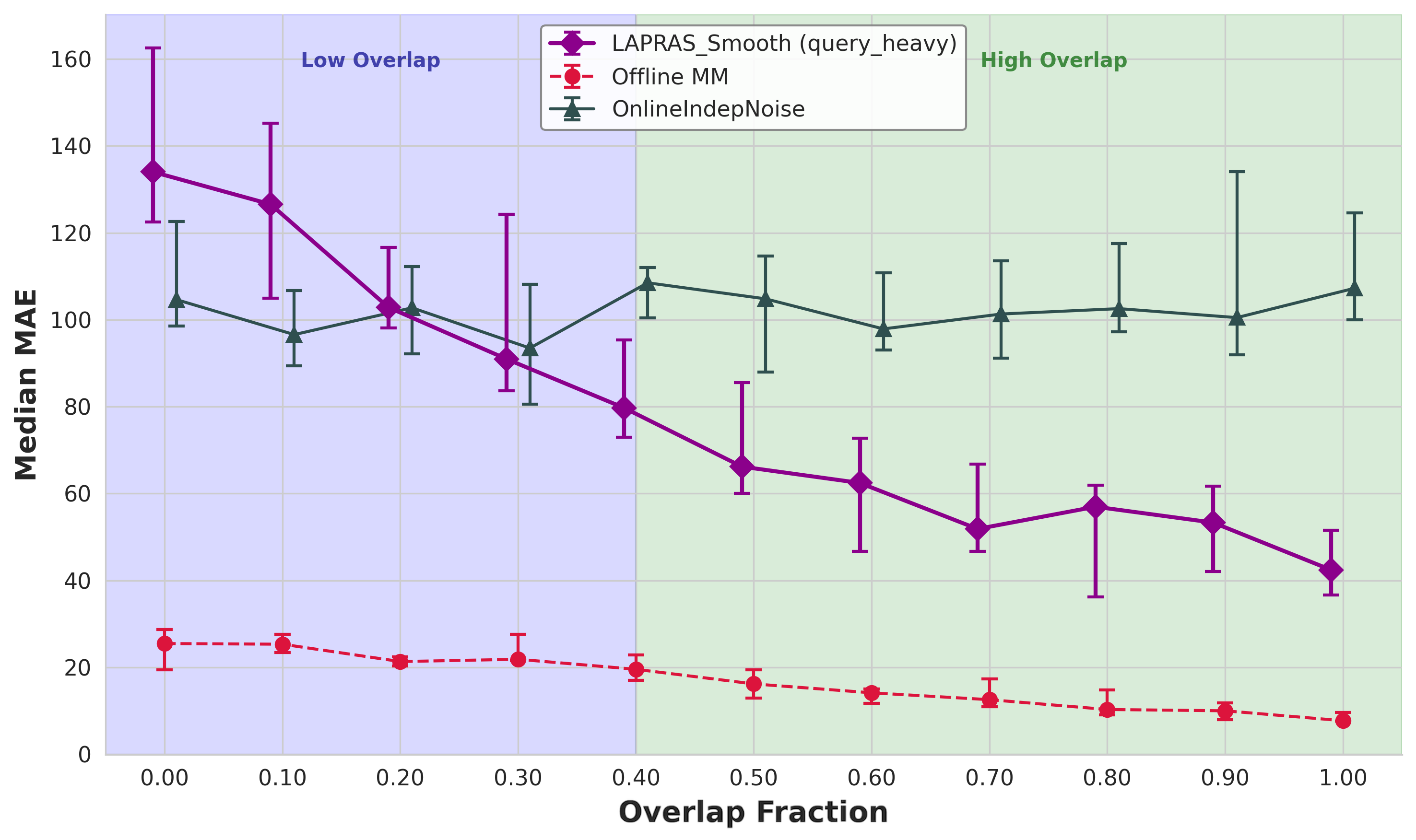}
        \caption{Age ($S=50$)}
        \label{fig:adultAge50}
    \end{subfigure}\hfill
    \begin{subfigure}[t]{0.48\textwidth}
        \centering
        \includegraphics[width=\linewidth]{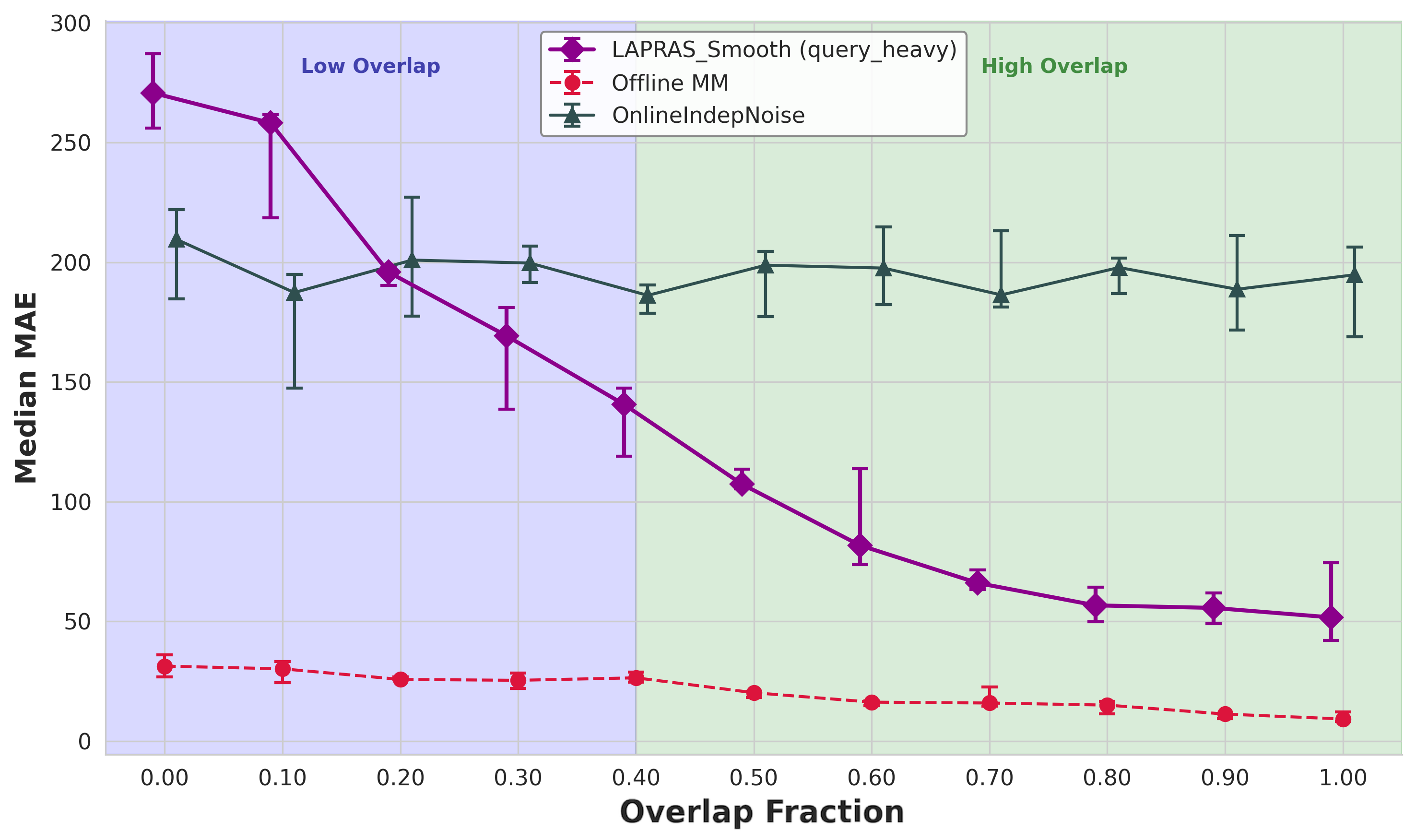}
        \caption{Age ($S=100$)}
        \label{fig:adultAge100}
    \end{subfigure}

    \vspace{0.5em}

    \begin{subfigure}[t]{0.48\textwidth}
        \centering
        \includegraphics[width=\linewidth]{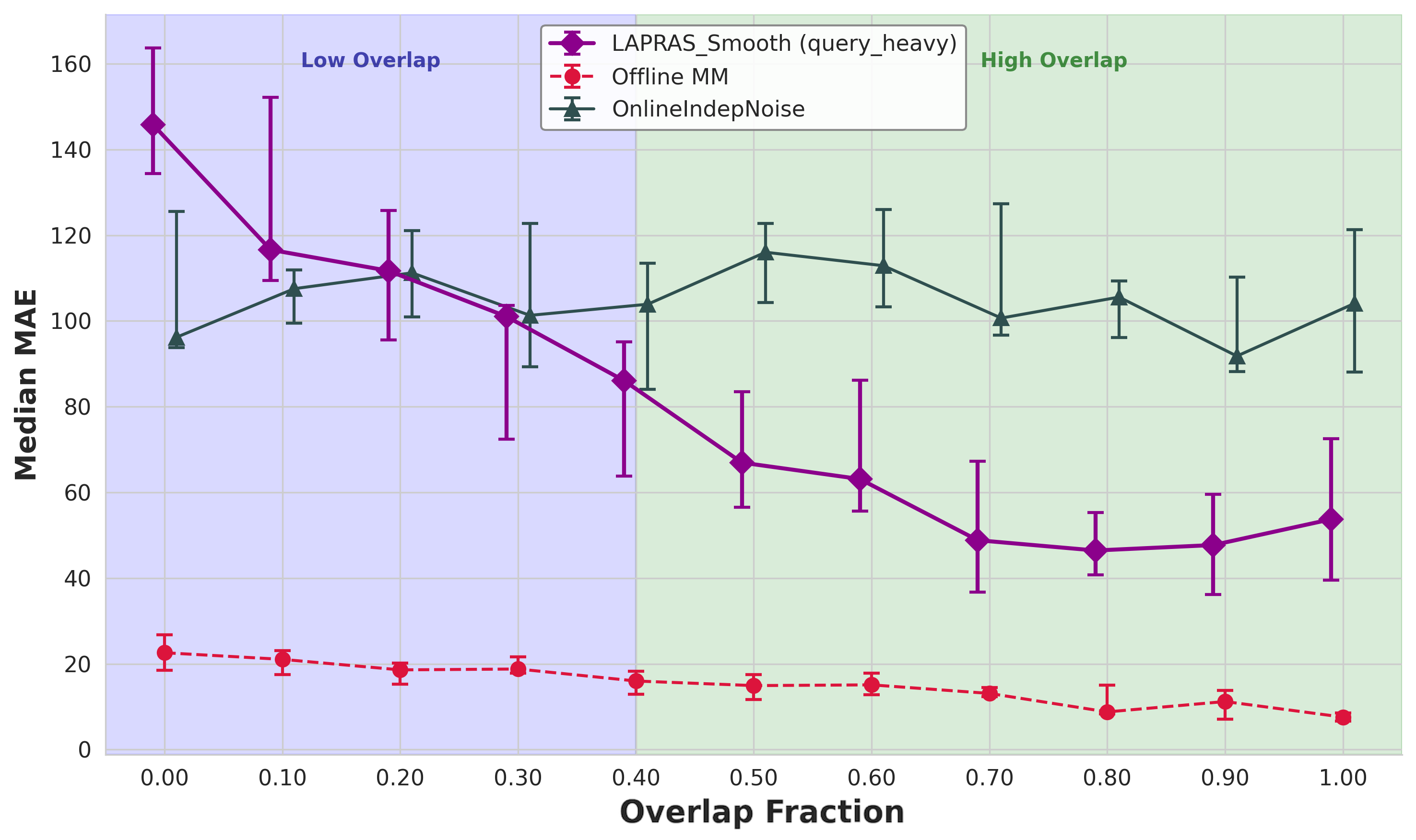}
        \caption{Country ($S=50$)}
        \label{fig:adultCountry50}
    \end{subfigure}\hfill
    \begin{subfigure}[t]{0.48\textwidth}
        \centering
        \includegraphics[width=\linewidth]{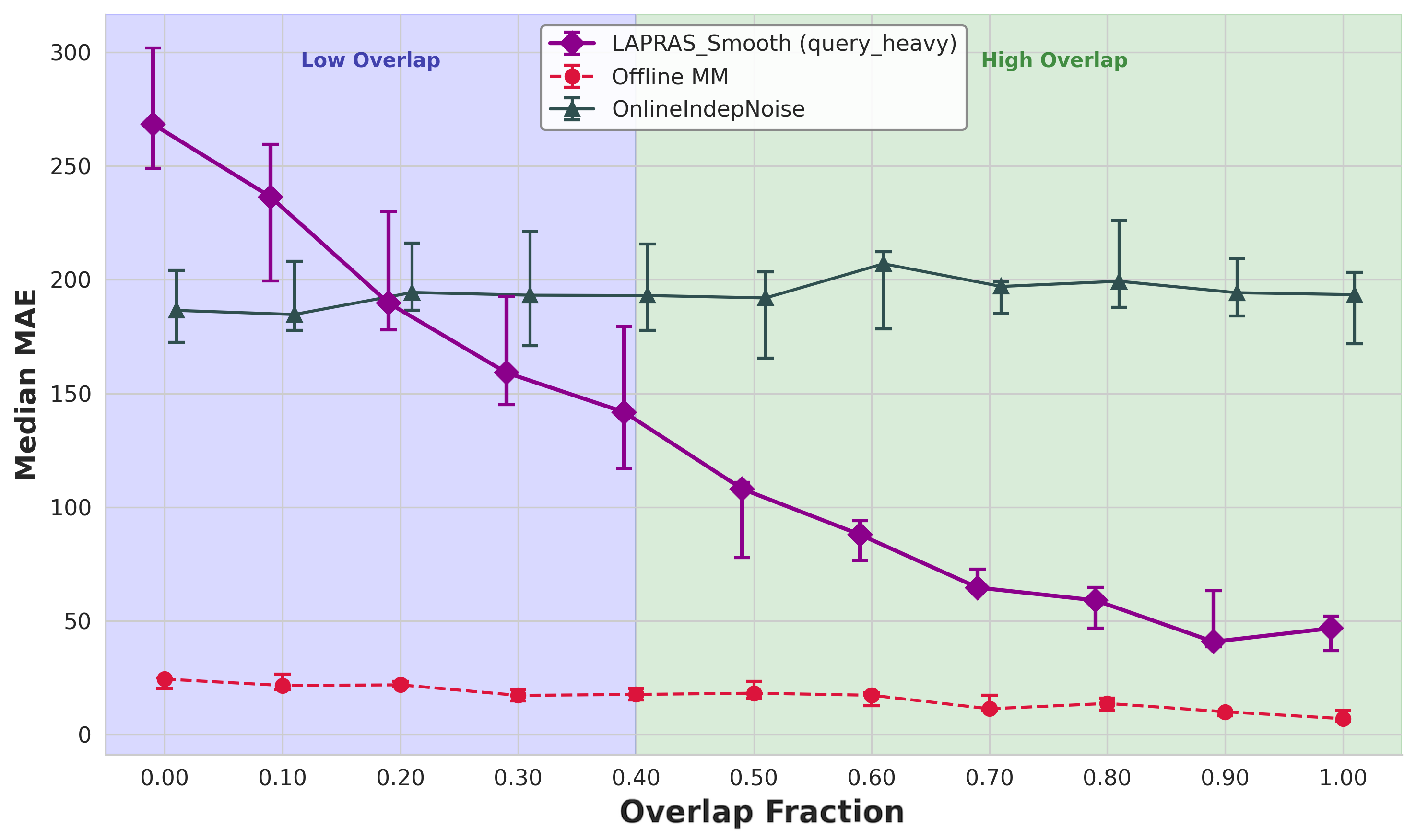}
        \caption{Country ($S=100$)}
        \label{fig:adultCountry100}
    \end{subfigure}

    \caption{\textbf{ADULT dataset.} Median MAE (min, max bars) at $\epsilon=1.0$ for two attributes (Age, Country) for stream sizes $S \in \{50,100\}$.} [cite: 1358]
    \label{fig:adult_mae_eps1}
\end{figure*}

\begin{figure*}[!ht]
    \centering

    \begin{subfigure}[t]{0.48\textwidth}
        \centering
        \includegraphics[width=\linewidth]{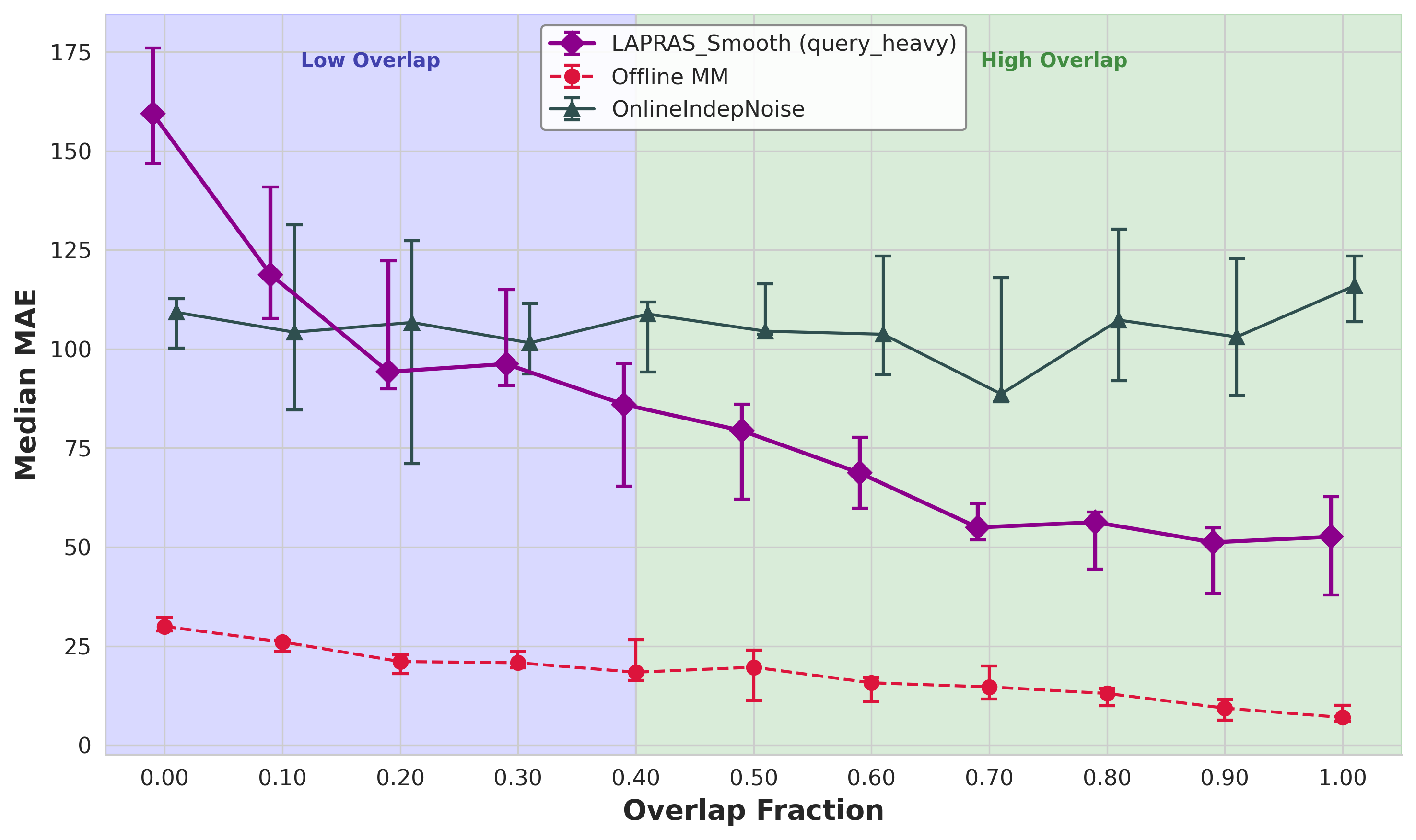}
        \caption{($S=50$)}
        \label{fig:gowalla50}
    \end{subfigure}\hfill
    \begin{subfigure}[t]{0.48\textwidth}
        \centering
        \includegraphics[width=\linewidth]{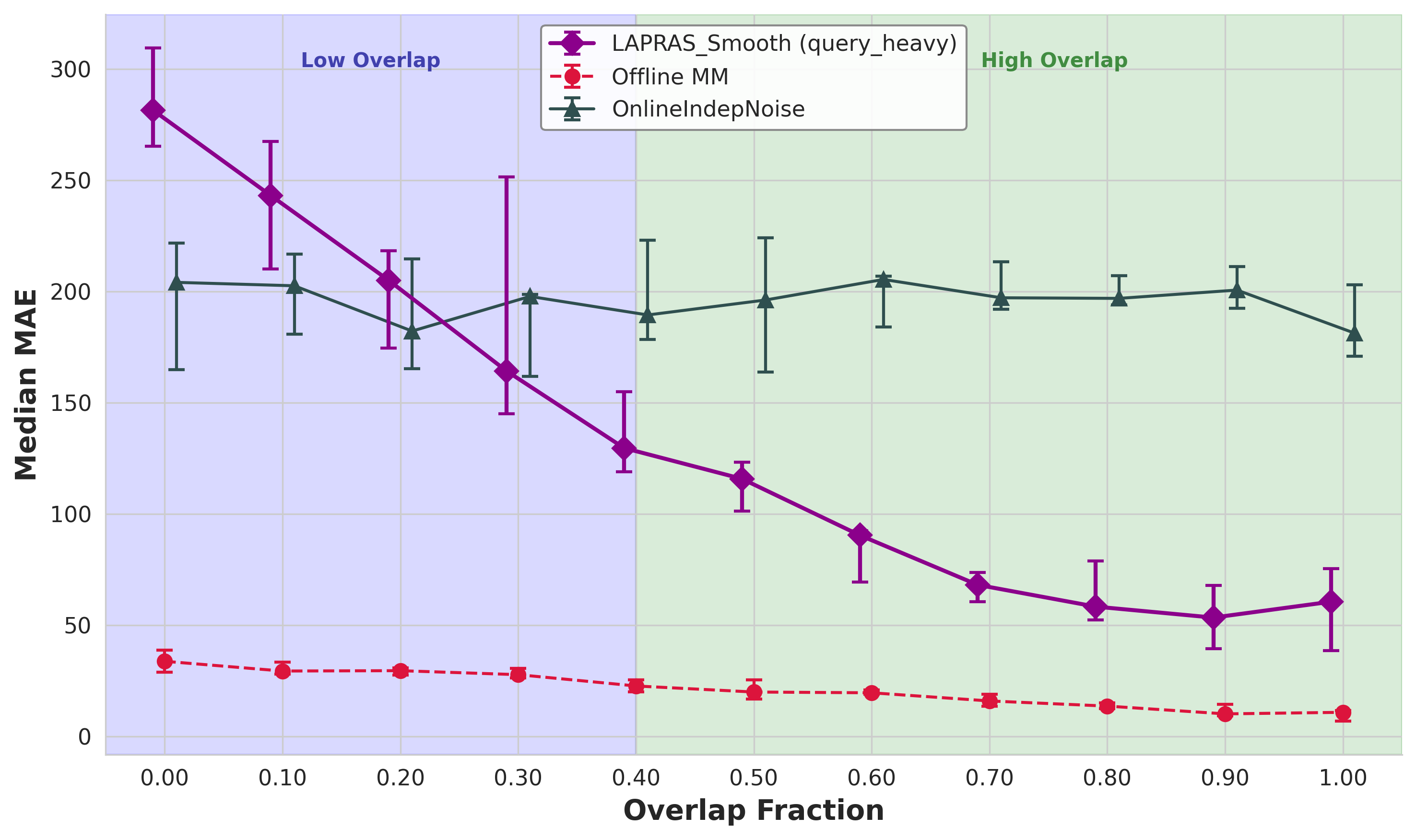}
        \caption{($S=100$)}
        \label{fig:gowalla100}
    \end{subfigure}
    \caption{\textbf{Gowalla dataset.} Median MAE (min, max bar) at $\epsilon=1.0$ for stream sizes $S \in \{50,100\}$.}
    \label{fig:gowalla_mae_eps1}
\end{figure*}

\begin{table*}[!ht]
\caption{Median MAE versus overlap for \textbf{Adult} and \textbf{Gowalla} ($S=100$, $\epsilon=1.0$).}
\label{tab:overlap_comparison_merged}
\centering
\scriptsize
\setlength{\tabcolsep}{4.5pt}
\renewcommand{\arraystretch}{1.15}
\begin{tabularx}{\textwidth}{c Y Y Y Y Y Y}
\toprule
& \multicolumn{3}{c}{\textbf{ADULT}} & \multicolumn{3}{c}{\textbf{Gowalla}} \\
\cmidrule(lr){2-4}\cmidrule(lr){5-7}
\textsc{Overlap \%}
& \textsc{LAPRAS\_Static} & \textsc{LAPRAS\_Static} & \textsc{Online}
& \textsc{LAPRAS\_Static} & \textsc{LAPRAS\_Static} & \textsc{Online} \\
& \textit{(query\_heavy)} & \textit{(matrix\_heavy)} & \textsc{Indep. Noise}
& \textit{(query\_heavy)} & \textit{(matrix\_heavy)} & \textsc{Indep. Noise} \\
\midrule
\lowrow  0.0 & 201.7689 & 368.5622 & \textbf{186.5037} & 213.8990 & 391.1077 & \textbf{204.0514} \\
\lowrow  0.1 & \textbf{181.1706} & 324.4794 & 184.7038 & \textbf{178.8394} & 317.4242 & 202.5222 \\
\lowrow  0.2 & \textbf{145.9802} & 254.2288 & 194.3937 & \textbf{135.2609} & 234.7375 & 182.1618 \\
\lowrow  0.3 & \textbf{119.6441} & 200.7279 & 193.1964 & \textbf{114.4690} & 184.5511 & 197.7361 \\
\lowrow  0.4 & \textbf{105.7823} & 149.6603 & 193.0435 & \textbf{103.6537} & 156.0704 & 189.4034 \\
\highrow  0.5 & \textbf{89.0490}  & 123.4891 & 191.9823 & \textbf{80.8661}  & 114.8440 & 196.1295 \\
\highrow  0.6 & \textbf{66.7834}  & 83.3435  & 207.0032 & \textbf{70.9335}  & 85.4525  & 205.3693 \\
\highrow  0.7 & \textbf{62.2120}  & 73.0511  & 197.0310 & \textbf{69.8578}  & 79.4040  & 197.1238 \\
\highrow 0.8 & 80.7648  & \textbf{77.1926}  & 199.3274 & 71.9180  & \textbf{64.7475}  & 196.8623 \\
\highrow 0.9 & 62.2076  & \textbf{39.4620}  & 194.2745 & 60.6568  & \textbf{42.6085}  & 200.5918 \\
\highrow 1.0 & 43.0020  & \textbf{14.3340}  & 193.4255 & 51.2221  & \textbf{17.0740}  & 181.2074 \\
\bottomrule
\end{tabularx}
\end{table*}

We evaluate \textbf{LAPRAS} against standard baselines to test \emph{consistency} (approaching offline utility when predictions are accurate) and \emph{robustness} (remaining comparable to online mechanisms when predictions fail). For reproducibility, we generate workloads as described in~\cref{sec:appendix_exp}. 
We compare against \textbf{Online Independent Noise} (independent noise per query) and \textbf{OfflineMM} (matrix mechanism optimized for \(\mathrm{Q}\)). To benchmark against existing reactive caching methods, we additionally compare against \textbf{CacheDP}~(\citep{DBLP:journals/pvldb/MazmudarHLR022}) under a strictly controlled, fixed global privacy budget. All experiments use $\epsilon=1.0$ and $\delta=10^{-3}$, report two internal budget splits (\emph{matrix\_heavy}, \emph{query\_heavy}), and average over $5$ runs. Our code is available here\footnote{\url{https://github.com/mundrapranay/learning-augmented-privacy/tree/icml2026}}.

\noindent\textbf{Datasets and Compute.}
We use the Adult dataset~\citep{10.5555/3001460.3001502}, constructing a univariate histogram ($x\in\mathbb{R}^n$) for attributes: \emph{age} and \emph{country}. We also use the Gowalla check-in dataset~\citep{cho2011friendship}, forming a histogram over locations and randomly subsampling ($N=100$) locations. To test LAPRAS under a highly realistic, interactive access pattern, we additionally evaluate on a workload derived from IDEBench~\citep{10.1145/3318464.3380574}. Using the flights dataset and the \texttt{ORIGIN\_STATE\_ABR} family, we reduced IDE-style aggregate visualization queries into base linear counting queries, fully preserving the original filter predicates and exploratory query order. Experiments run on an Intel Xeon W-2145 (8 cores, 16 threads, up to 4.5GHz) with 64GB RAM[cite: 1368]. For the matrix-mechanism SDP, we use \texttt{cvxpy}~\citep{diamond2016cvxpy, agrawal2018rewriting} with the \texttt{sdpa-python} solver~\citep{doi:10.1080/1055678031000118482, Yamashita2012, doi:10.1109/CACSD.2010.5612693, Kim2011}.

\noindent \textbf{Overall Utility.}
The overall performance of LAPRAS, as illustrated in~\cref{fig:adult_mae_eps1} and~\cref{fig:gowalla_mae_eps1}, demonstrates a monotonic reduction in Mean Absolute Error (MAE) as the overlap percent increases. This trend confirms the fundamental premise of our approach: by exploiting workload predictability, LAPRAS bridges the gap between the hardness of the online model and the efficiency of the offline model. In regimes of high predictability, LAPRAS's utility converges toward the theoretical lower bound established by the Offline Matrix Mechanism. Conversely, as predictability degrades, the error profile smoothly transitions to match the Online Independent baseline. This behavior validates that our budget allocation strategies effectively hybridize the two paradigms, allowing the system to capitalize on \emph{good} queries without suffering catastrophic failure on \emph{bad} ones. Evaluation on the IDEBench dataset strongly confirms these synthetic trends: at $100\%$ overlap, LAPRAS naturally converges to near-perfect identical accuracy (MAE of $11.5$). 

\noindent \textbf{Static vs. Smooth Allocation.}
While Static Allocation provides reliable worst-case guarantees, Smooth Allocation progressively overtakes it as prediction overlap improves. In head-to-head empirical evaluations keeping all parameters identical ($\epsilon$, $\delta$, dataset, $|P|$, $|S|$), Smooth Allocation perfectly matches Static at $1.0$ overlap (by construction) and begins to yield significantly lower MAE once overlap exceeds $0.5$~(\cref{sec:appendix-smooth-static}). In high-overlap scenarios ($\ge 0.7$), Smooth Allocation achieves up to $40\%$ lower MAE than Static, exhibiting consistent improvements across multiple normalized and absolute metrics (MAE, RMSE, NMAE, SMAPE). At lower overlaps ($< 0.5$), Static Allocation remains generally safer, as its uniform allocation strategy avoids overcommitting residual budget to poorly predicted future streams.

\noindent \textbf{Guidance on Budget Splits.}
Empirically, the correct budget allocation strategy is dictated by expected prediction quality. When predictions are poor (overlap $\le 0.5$), the \textit{query\_heavy} split achieves the lowest MAE by preserving more privacy budget for the online unpredicted queries. By contrast, when predictions are highly accurate (overlap $\ge 0.8$), \textit{matrix\_heavy} decisively outperforms other splits. Investing a larger fraction of the budget into the offline workload-aware matrix mechanism guarantees highly accurate precomputed answers that can be reused continuously across the stream.

\noindent \textbf{Robustness in Low Overlap Regime \& Adversarial Arrivals.}
We analyze the robustness of LAPRAS in the adversarial setting where the oracle fails to predict the workload. In this regime, the mechanism relies almost entirely on the allocation strategy to manage the online budget for the unpredicted stream. Table~\ref{tab:overlap_comparison_merged} reports the \emph{Static Allocation} results: on Adult, the \textit{query\_heavy} split attains MAE $201.8$ versus $186.5$ for Online Independent Noise, and on Gowalla $213.9$ versus $204.1$, indicating only a small overhead from reserving budget for a prediction set that does not materialize. Crucially, the error does not scale uncontrollably; the unbiased stopping-time estimator $\hat{B}$ successfully stabilizes noise levels even when the stream is dominated by unpredicted queries, confirming that LAPRAS retains worst-case online guarantees. 

To further evaluate robustness against the random-order assumption, we tested a worst-case \textit{bad-first} ordering, wherein all unpredicted queries arrive strictly before predicted queries~(\cref{sec:appendix-adversarial-order}). Under this severe adversarial pattern, Smooth Allocation degrades gracefully rather than failing catastrophically, showing at worst a $\approx 1.6\times$ MAE increase relative to Static Allocation. The \textit{matrix\_heavy} configuration proved exceptionally resilient to adversarial arrivals, effectively offering a stable accuracy floor due to its heavy reliance on the initial batch release.

\noindent \textbf{Sensitivity to Prediction Set Size and False Positives.}
We explicitly evaluated the impact of large predicted sets and the resulting false positive rate (queries predicted but never realized). Varying $|P| \in \{50, 100, 200, 500\}$, we found that prediction overlap is the dominant factor for downstream utility, whereas $|P|$ and false positives are strictly secondary~(\cref{sec:appendix-false-positive}). For example, at an overlap of $1.0$ with $|P|=100$ and $S=50$ (implying $50$ false positives and a $50\%$ false positive rate), Smooth MAE remains extremely close ($72.4$) to the MAE seen with $|P|=50$ ($74.6$). Thus, over-predicting does not materially hurt the mechanism's online utility. The primary constraint of a very large $P$ is the offline precomputation phase, as the SDP runtime scales roughly $\mathcal{O}(|P|^{1.5})$.

\noindent \textbf{Extending with a Query Cache (Smooth+Cache).}
To optimize budget utilization on bad queries, we extended LAPRAS with a post-processing query cache (\textbf{Smooth+Cache}). All queries answered by the matrix mechanism are added to the cache. Upon receiving an unpredicted query, LAPRAS checks if it can be represented as a linear combination of cached queries via least-squares decomposition. If successful, the query is answered at zero additional privacy cost; if not, fresh budget is spent and the cache is updated. This addition yields consistent utility gains, particularly at moderate-to-high overlaps ($0.3$--$0.8$), where it leverages the robust initial matrix mechanism release to reduce MAE by up to $34.8\%$ relative to standard Smooth Allocation~(\cref{sec:appendix-cache}).

\noindent \textbf{Comparison with CacheDP.}
We conducted a controlled comparison against CacheDP~\citep{DBLP:journals/pvldb/MazmudarHLR022} using identical global privacy budgets. Evaluated across scale-free normalized metrics (NMAE, NRMSE) on the Adult dataset, LAPRAS fundamentally outperforms CacheDP. When warming the cache with the predicted set queries prior to stream arrival, LAPRAS Static recorded an NMAE of $0.0019$ at $0.0$ overlap, compared to $0.2076$ for CacheDP, a roughly $100\times$ difference. This disparity occurs because caching-based baselines distribute a single global budget linearly across queries, incurring large independent noise profiles, whereas LAPRAS pre-optimizes noise globally via the matrix mechanism~(\cref{sec:appendix-cachedp}).

\noindent \textbf{Optimality in High Overlap Regime.}
In the high-overlap regime ($\rho\approx 1$), predictions cover most of the stream, and LAPRAS shifts effort from online noise to offline correlation-aware preprocessing. Table~\ref{tab:overlap_comparison_merged} shows that even the \textit{query\_heavy} split improves substantially over Independent Noise, but the gains are largest under \textit{matrix\_heavy}, as expected: allocating more budget to the batch component directly reduces the error of answers served from the precomputed state. Concretely, on Adult at $\rho=1.0$, \textit{matrix\_heavy} attains MAE $14.3$ versus $193.4$ for Independent Noise, approaching the OfflineMM reference ($6.96$); on Gowalla, the corresponding values are $17.1$ versus $181.2$, again close to OfflineMM ($10.8$). These results confirm the intended behavior in the learning-augmented regime: when overlap is high, most queries are answered via the offline release with zero marginal privacy cost online, preserving the residual budget to cover the small unpredicted remainder and yielding near-offline utility in a streaming setting. Evaluated on the realistic IDEBench workload, LAPRAS Smooth similarly demonstrates extreme efficiency: at a moderate $50\%$ overlap with \textit{matrix\_heavy}, Smooth isolates the unpredicted queries perfectly, returning a $50.4$ MAE compared to LAPRAS Static's $129.7$, an impressive $61\%$ reduction in total error.

\section{Conclusion}
We introduced LAPRAS, a learning-augmented framework for online differentially private linear query answering that exploits workload predictability to narrow the online offline utility gap. Empirically, we show that predictions materially improve utility: LAPRAS approaches near optimal offline performance when overlap is high, yet remains comparable to the standard online baseline when predictions fail. Future work includes extending the estimator beyond random order to partially or fully adversarial arrivals, and replacing the offline component (e.g, HDMM~\cite{10.14778/3407790.3407798}).

\section{Impact Statement}

This paper presents work whose goal is to advance the field of machine learning. There are many potential societal consequences of our work, none of which we feel must be specifically highlighted here.

\bibliography{references}
\bibliographystyle{icml2026}
\newpage
\appendix
\onecolumn

\section{Theoretical Analysis}
\label{sec:appendixTheory}
\begin{restatable}{lemma}{geninversefactorial}
\label{lemma:identity}
Let a stream of size $S$ contain $B \ge T$ bad queries and $G=S-B$ good queries. Let $L$ be the random variable for the number of queries observed until the $T$-th bad query is seen. Let $T \geq c$ where $c$ is some small constant. The number of good queries seen, $Y = L-T$, follows a Negative Hypergeometric distribution, $Y \sim NHG(S, G, T)$

For any integer $k$ such that $1 \le k < T$, the following identity for the inverse factorial moments of $L-1$ holds:
\[ 
 \E\left[\frac{\left(T-1\right)_k}{\left(L-1\right)_k}\right]= \frac{(B)_k}{(S)_k}
\]
where $(x)_k = x(x-1)\cdots(x-k+1)$ denotes the falling factorial.
\end{restatable}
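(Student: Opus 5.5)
We prove the identity by computing the distribution of $L$ exactly and then summing against the falling-factorial ratio. Because the stream is a uniformly random permutation of $B$ bad and $G$ good queries, the event $\{L=\ell\}$ (for $T\le \ell\le G+T$) says two things. First, the $\ell$-th position is bad. Second, exactly $T-1$ of the first $\ell-1$ positions are bad. Counting placements of the bad positions gives
\[
\Pr[L=\ell] \;=\; \frac{\binom{\ell-1}{T-1}\binom{S-\ell}{B-T}}{\binom{S}{B}} .
\]
This is the standard $\nhg(S,G,T)$ law for $Y=L-T$.

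The key algebraic step absorbs the factor $(T-1)_k/(\ell-1)_k$ into the first binomial coefficient. For $k\le T-1\le \ell-1$ we have
\[
\frac{(T-1)_k}{(\ell-1)_k}\binom{\ell-1}{T-1}=\binom{\ell-1-k}{T-1-k}.
\]
To verify this, write $\binom{\ell-1}{T-1}=\frac{(\ell-1)_k}{(T-1)_k}\binom{\ell-1-k}{T-1-k}$, which follows from expanding both sides as ratios of factorials. The hypothesis $1\le k<T$ guarantees $T-1-k\ge 0$, so the new binomial is a genuine count and $L-1\ge T-1\ge k$ never makes the denominator vanish. It follows that
\[
\E\!\left[\frac{(T-1)_k}{(L-1)_k}\right]=\frac{1}{\binom{S}{B}}\sum_{\ell}\binom{\ell-1-k}{T-1-k}\binom{S-\ell}{B-T}.
\]

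Next we evaluate the sum by a Vandermonde-type (Chu--Vandermonde) convolution, interpreted combinatorially. Substitute $m=\ell-k$. The sum $\sum_m \binom{m-1}{(T-k)-1}\binom{(S-k)-m}{(B-k)-(T-k)}$ counts $(B-k)$-subsets of $[S-k]$, classified by the position $m$ of their $(T-k)$-th smallest element. Summing over $m$ therefore gives $\binom{S-k}{B-k}$. Equivalently, this is the statement that the $\nhg(S-k,G,T-k)$ probabilities sum to $1$, which is a clean way to phrase it.

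It remains to check the ranges. As $\ell$ runs over $T,\dots,G+T$, the shifted index $m$ runs over $T-k,\dots,G+T-k$, which is exactly the support for the reduced problem with $S-k$ items, $B-k$ bad ones and threshold $T-k$. This works because $T-k\ge1$ and $B-k\ge T-k$.

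Finally we simplify the ratio:
\[
\binom{S-k}{B-k}\Big/\binom{S}{B}=\frac{(B)_k}{(S)_k}.
\]
This is an immediate factorial computation. The main obstacle is bookkeeping rather than ideas: we must confirm the summation limits match after the shift and that $k<T$ keeps every binomial well defined. We do not expect to need the assumption $T\ge c$ beyond ensuring $T\ge 2$ so that some $k$ is admissible. The case $k=1$ gives $\E[(T-1)/(L-1)]=B/S$, which is what \cref{thm:expectation} uses.
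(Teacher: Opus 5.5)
Your proof is correct and is essentially the paper's argument: the paper also absorbs $(T-1)_k/(L-1)_k$ into the first binomial coefficient of the negative hypergeometric p.m.f.\ (indexed by $g=\ell-T$) and then sums via Chu--Vandermonde to obtain $\binom{S-k}{G}/\binom{S}{G}=(B)_k/(S)_k$. The paper frames this as an induction on $k$, but its inductive hypothesis is never actually used. Your direct computation for general $k$, with the Vandermonde step read as the normalization of an $\mathrm{NHG}(S-k,G,T-k)$ law, is therefore the same proof, stated a bit more cleanly.
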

\begin{proof}
We proceed by induction on $k$.

\paragraph{Base Case (k=1):}
We must show that $\mathbb{E}\left[\frac{T-1}{L-1}\right] = \frac{B}{S}$.
Let $D = \frac{T-1}{L-1} = \frac{T-1}{Y+T-1}$. The expectation is given by summing over the probability mass function (p.m.f.) of $Y \sim NHG(S, G, T)$.
\begin{align*}
    \mathbb{E} \left[\frac{T-1}{L-1}\right]  &= \sum_{g=0}^G \frac{T-1}{g + T -1} \cdot P(Y=g) \\
    &= \sum_{g=0}^G \frac{T-1}{g + T -1} \cdot \frac{\binom{g+T-1}{g} \binom{S-T-g}{G-g}}{\binom{S}{G}} \\
    &= \frac{1}{\binom{S}{G}} \sum_{g=0}^G \frac{T-1}{g + T -1} \cdot \frac{(g + T -1)!}{g!(T-1)!} \binom{S-T-g}{G-g} \\
    &= \frac{1}{\binom{S}{G}} \sum_{g=0}^G \frac{(g+T-2)!}{g!(T-2)!} \binom{S-T-g}{G-g} \\
    &= \frac{1}{\binom{S}{G}} \sum_{g=0}^G \binom{g+T-2}{g}\binom{S-T-g}{G-g}
\end{align*}
We use the Chu-Vandermonde Identity, which states: $\sum_{i=0}^u \binom{p+i}{i}\binom{q-i}{u-i} = \binom{p+q+1}{u}$
Let $p=T-2$, $i=g$, $q=S-T$, and $u=G$. The sum becomes $\binom{(T-2)+(S-T)+1}{G} = \binom{S-1}{G}$.
\begin{align*}
    \mathbb{E} \left[\frac{T-1}{L-1}\right]  &= \frac{1}{\binom{S}{G}}\binom{S-1}{G} \\
    &= \frac{G!(S-G)!}{S!} \cdot \frac{(S-1)!}{G!(S-1-G)!} \\
    &= \frac{(S-G)!}{S!} \cdot \frac{(S-1)!}{(S-G-1)!} \\
    &= \frac{S-G}{S} = \frac{B}{S}
\end{align*}
Since $(B)_1 = B$ and $(S)_1 = S$, the base case holds.

\paragraph{Inductive Step:}
Assume for some integer $k \ge 1$ that the identity holds (Inductive Hypothesis):
\[ 
 \E\left[\frac{\left(T-1\right)_k}{\left(L-1\right)_k}\right]= \frac{(B)_k}{(S)_k}
\]
We want to prove that it holds for $k+1$:
\[ 
 \E\left[\frac{\left(T-1\right)_{(k+1)}}{\left(L-1\right)_{(k+1)}}\right]= \frac{(B)_{(k+1)}}{(S)_{(k+1)}}
\]
Let's evaluate the expectation on the left-hand side.
\[ 
\E\left[\frac{\left(T-1\right)_{(k+1)}}{\left(L-1\right)_{(k+1)}}\right]= \sum_{g=0}^G \frac{(T-1)_{k+1}}{(g+T-1)_{k+1}} \cdot \frac{\binom{g+T-1}{g} \binom{S-T-g}{G-g}}{\binom{S}{G}}
\]
Consider the term involving the falling factorials and the first binomial coefficient:
\begin{align*}
    \frac{(T-1)_{k+1}}{(g+T-1)_{k+1}} \binom{g+T-1}{g} &= \frac{(T-1)! / (T-k-2)!}{(g+T-1)! / (g+T-k-2)!} \cdot \frac{(g+T-1)!}{g!(T-1)!} \\
    &= \frac{(T-1)!}{(T-k-2)!} \cdot \frac{(g+T-k-2)!}{(g+T-1)!} \cdot \frac{(g+T-1)!}{g!(T-1)!} \\
    &= \frac{(g+T-k-2)!}{g!(T-k-2)!} = \binom{g+T-k-2}{g}
\end{align*}
Substituting this back into the expectation sum:
\[
\E\left[\frac{\left(T-1\right)_{(k+1)}}{\left(L-1\right)_{(k+1)}}\right] = \frac{1}{\binom{S}{G}} \sum_{g=0}^G \binom{g+T-k-2}{g} \binom{S-T-g}{G-g}
\]
Again, we apply the Chu-Vandermonde Identity with $p=T-k-2$, $i=g$, $q=S-T$, and $u=G$. The sum becomes $\binom{(T-k-2)+(S-T)+1}{G} = \binom{S-k-1}{G}$.
\begin{align*}
    \mathbb{E}\left[\frac{\left(T-1\right)_{(k+1)}}{\left(L-1\right)_{(k+1)}}\right] &= \frac{1}{\binom{S}{G}} \binom{S-k-1}{G} \\
    &= \frac{G!(S-G)!}{S!} \cdot \frac{(S-k-1)!}{G!(S-k-1-G)!} \\
    &= \frac{(S-G)!}{S!} \cdot \frac{(S-k-1)!}{(S-G-k-1)!} \\
    &= \frac{B!}{S!} \cdot \frac{(S-k-1)!}{(B-k-1)!} \\
    &= \frac{B! / (B-(k+1))!}{S! / (S-(k+1))!} = \frac{(B)_{(k+1)}}{(S)_{(k+1)}}
\end{align*}
This completes the inductive step.

By the principle of mathematical induction, the identity holds for all integers $k \ge 1$.
\end{proof}

\variance*
\begin{proof}
Recall that $\hat{B} = S \cdot D$, where $D = \frac{T-1}{L-1}$, and $L = Y+T$ is the number of queries until the $T$-th bad query is observed. The variance is given by $\Var[\hat{B}] = S^2 \Var[D] = S^2(\E[D^2] - (\E[D])^2)$.

Our goal is to find a tighter upper bound for $\E[D^2]$. Using~\cref{lemma:identity} (for $k$=2), we have:
\[
    \E\left[\frac{(T-1)_2}{(L-1)_2}\right] = \E\left[\frac{(T-1)(T-2)}{(L-1)(L-2)}\right] = \frac{B(B-1)}{S(S-1)}
\]
This identity holds for $T \ge 3$, which ensures that $L-2$ is always positive, as $L \ge T$. Consider the term $\frac{1}{(L-1)^2}$. For any $L \ge 3$, we can establish the following strict inequality:
\[
    L-2 < L-1 \implies \frac{1}{(L-1)(L-2)} > \frac{1}{(L-1)^2}
\]
This inequality allows us to bound the expectation. By taking the expectation of both sides (which preserves the inequality for non-constant random variables), we get:
\[
    \E\left[\frac{1}{(L-1)^2}\right] < \E\left[\frac{1}{(L-1)(L-2)}\right]
\]
Now, we can use this to bound $\E[D^2]$:
\begin{align*}
    \E[D^2] &= \E\left[\left(\frac{T-1}{L-1}\right)^2\right] \\
    &= (T-1)^2 \E\left[\frac{1}{(L-1)^2}\right] \\
    &< (T-1)^2 \E\left[\frac{1}{(L-1)(L-2)}\right]
\end{align*}
We can rewrite the expectation term to relate it to our known identity:
\begin{align*}
    \E\left[\frac{1}{(L-1)(L-2)}\right] &= \frac{1}{(T-1)(T-2)} \E\left[\frac{(T-1)(T-2)}{(L-1)(L-2)}\right] \\
    &= \frac{1}{(T-1)(T-2)} \cdot \frac{B(B-1)}{S(S-1)}
\end{align*}
Substituting this back into our inequality for $\E[D^2]$:
\begin{align*}
    \E[D^2] &< (T-1)^2 \left( \frac{1}{(T-1)(T-2)} \frac{B(B-1)}{S(S-1)} \right) \\
    &= \frac{T-1}{T-2} \frac{B(B-1)}{S(S-1)}
\end{align*}
Now we substitute this tighter bound into the variance formula for $D$:
\begin{align*}
    \Var[D] = \E[D^2] - (\E[D])^2 &< \frac{T-1}{T-2} \frac{B(B-1)}{S(S-1)} - \left(\frac{B}{S}\right)^2
\end{align*}
Finally, we find the variance of $\hat{B} = S \cdot D$:
\begin{align*}
    \Var[\hat{B}] = S^2 \Var[D] &< S^2 \left( \frac{T-1}{T-2} \frac{B(B-1)}{S(S-1)} - \frac{B^2}{S^2} \right) \\
    &= \frac{S^2(T-1)B(B-1)}{S(S-1)(T-2)} - \frac{S^2 B^2}{S^2} \\
    &= \frac{S(T-1)}{(S-1)(T-2)}B(B-1) - B^2
\end{align*}
This concludes the proof for the sharper upper bound.
\end{proof}

\asymvaraince*
\begin{proof}
We begin with the bound from~\cref{thm:sharper_variance}. Let's first rearrange the expression:
\begin{align*}
    \Var[\hat{B}] &< \frac{S(T-1)}{(S-1)(T-2)}B(B-1) - B^2 \\
    &= \left(\frac{S(T-1)}{(S-1)(T-2)} - 1\right)B^2 - \frac{S(T-1)}{(S-1)(T-2)}B
\end{align*}
Our goal is to analyze the coefficients as $S \to \infty$. Let's analyze the first coefficient:
\begin{align*}
    \frac{S(T-1)}{(S-1)(T-2)} - 1 &= \frac{S(T-1) - (S-1)(T-2)}{(S-1)(T-2)} \\
    &= \frac{(ST-S) - (ST - 2S - T + 2)}{(S-1)(T-2)} \\
    &= \frac{S+T-2}{(S-1)(T-2)}
\end{align*}
Now, we substitute $T = \log^2(S)$. For large $S$, the dominant term in the numerator is $S$ and in the denominator is $S \cdot T = S \log^2(S)$. The asymptotic behavior is:
\[
    \frac{S+T-2}{(S-1)(T-2)} \sim \frac{S}{S \log^2(S)} = \frac{1}{\log^2(S)}
\]
For the second coefficient, as $S \to \infty$, we have $\frac{S}{S-1} \to 1$ and $\frac{T-1}{T-2} = \frac{\log^2(S)-1}{\log^2(S)-2} \to 1$. Thus, the entire coefficient $\frac{S(T-1)}{(S-1)(T-2)} \to 1$.

Combining these results, the variance bound behaves as:
\[
    \Var[\hat{B}] \approx \frac{1}{\log^2(S)}B^2 - B
\]
The dominant term in this expression is the one involving $B^2$. 
\[
    \Var[\hat{B}] = O\left(\frac{B^2}{\log^2(S)}\right) 
\]
\end{proof}

\budgetsoundness*
\begin{proof}
Let $\epsilon_{\mathrm{rem}, i}$ denote the budget remaining after answering the $i$-th bad query, with $\epsilon_{\mathrm{rem}, 1} = \epsilon_{\mathrm{pool}}$. The allocation rule is defined as $\epsilon_i = \epsilon_{\mathrm{rem}, i-1} / (\hat{B}_{\mathrm{rem}, i} + 1)$, where $\hat{B}_{\mathrm{rem}, i} \ge 1$.

The recurrence relation for the remaining budget is:
\begin{align*}
    \epsilon_{\mathrm{rem}, i} &= \epsilon_{\mathrm{rem}, i-1} - \epsilon_i \\
    &= \epsilon_{\mathrm{rem}, i-1} \left(1 - \frac{1}{\hat{B}_{\mathrm{rem}, i} + 1}\right) \\
    &= \epsilon_{\mathrm{rem}, i-1} \left(\frac{\hat{B}_{\mathrm{rem}, i}}{\hat{B}_{\mathrm{rem}, i} + 1}\right).
\end{align*}

Let $\alpha_i = \frac{\hat{B}_{\mathrm{rem}, i}}{\hat{B}_{\mathrm{rem}, i} + 1}$. Since $\hat{B}_{\mathrm{rem}, i} \ge 1$, it holds that $0 < \alpha_i < 1$ for all $i$.
The final remaining budget after $B$ queries is obtained by unrolling the recurrence:
\begin{equation}
    \epsilon_{\mathrm{rem}, B} = \epsilon_{\mathrm{pool}} \prod_{k=2}^{B} \alpha_k.
\end{equation}

Since $\epsilon_{\mathrm{pool}} > 0$ and $\alpha_k > 0$ for all $k$, the product is strictly positive, implying $\epsilon_{\mathrm{rem}, B} > 0$. The total spent budget is $E_{\mathrm{spent}} = \epsilon_{\mathrm{pool}} - \epsilon_{\mathrm{rem}, B}$. Given $\epsilon_{\mathrm{rem}, B} > 0$, it follows that $E_{\mathrm{spent}} < \epsilon_{\mathrm{pool}}$.
\end{proof}

\section{Experimental Evaluation}
\label{sec:appendix_exp}

In this section, we provide extended experimental results that complement the main paper, including sensitivity analysis on the predicted set size, robustness under adversarial arrival orders, normalized error metrics, comparisons with caching-based baselines, and an ablation of our allocation strategies.

\noindent\textbf{Query Families}
Let the domain have size n. We construct two families:
- Range queries $U_{rng}$: all contiguous half-open intervals $[i:j)$ with $i < j$, represented as $q \in \left\{0,1\right\}^n$ with ones on indices $i,\dots,j-1$ and zeros elsewhere.
- Random binary queries $U_{rand}$: a fixed-size collection of random linear queries $q \in \mathbb{R}^n$.

\noindent\textbf{Query Stream} The oracle is provided the universe $U_{rng}$ and returns a predicted set $P$ by uniform sampling without replacement. The query stream $S$ is formed by taking $\lfloor \rho S \rfloor$ queries sampled without replacement from $P$ and $\left(S - \lfloor \rho S \rfloor\right)$ queries sampled without replacement from $U_{rand}$, followed by a random permutation. 

\subsection{Sensitivity to Predicted Set Size and False Positives}
\label{sec:appendix-false-positive}
\begin{table}[h]
\caption{Sensitivity of LAPRAS to Predicted Set Size ($|P|$) and False Positive Rate (FPR). Median MAE is reported for the Adult dataset using \textit{query\_heavy} allocation ($S=100$, $Q=50$, $\epsilon=1.0$).}
\label{tab:sensitivity_p}
\centering
\begin{tabular}{ccrcccc}
\toprule
$|P|$ & Overlap & FPR & Static MAE & Smooth MAE & Offline MM & Offline IN \\
\midrule
50  & 0.0 & 100\% & 216.0 & 304.3 & 39.2 & 207.4 \\
50  & 0.5 & 50\%  & 125.0 & 116.3 & 26.7 & 206.6 \\
50  & 1.0 & 0\%   & 74.6  & 74.6  & 10.8 & 206.7 \\
\midrule
100 & 0.0 & 100\% & 194.6 & 325.7 & 35.4 & 202.8 \\
100 & 0.5 & 75\%  & 97.1  & 107.0 & 24.6 & 203.6 \\
100 & 1.0 & 50\%  & 72.4  & 72.4  & 13.8 & 192.3 \\
\bottomrule
\end{tabular}
\end{table}

\begin{table}[h]
\caption{Runtime Scaling with Predicted Set Size ($|P|$). Runtimes reflect the median across overlaps for the \textit{query\_heavy} strategy.}
\label{tab:runtime_scaling}
\centering
\begin{tabular}{crccc}
\toprule
$|P|$ & $Q$ & LAPRAS MM (s) & LAPRAS Smooth (s) & Offline MM (s) \\
\midrule
50  & 50  & 185.8 & 0.008 & 170.6 \\
100 & 50  & 231.7 & 0.008 & 142.1 \\
200 & 500 & 356.9 & 0.100 & 596.6 \\
500 & 500 & 874.6 & 0.200 & 627.3 \\
\bottomrule
\end{tabular}
\end{table}

\subsection{Robustness to Adversarial Arrival Order}
\label{sec:appendix-adversarial-order}
To test the limits of our random-order assumption, we evaluated LAPRAS under a strict \textit{bad-first} adversarial ordering, where all unpredicted queries arrive before any predicted queries.

\begin{table}[h]
\caption{LAPRAS MAE under Adversarial Arrival Order (\textit{bad-first}). Settings: Gowalla, $|P|=100$, $Q=100$, $S=100$, $\epsilon=1.0$.}
\label{tab:adversarial_order}
\centering
\begin{tabular}{c | cc | cc | cc}
\toprule
& \multicolumn{2}{c|}{\textbf{Query Heavy}} & \multicolumn{2}{c|}{\textbf{Matrix Heavy}} & & \\
Overlap & Static & Smooth & Static & Smooth & Offline MM & Offline IN \\
\midrule
0.0 & 209.5 & 257.4 & 382.8 & 334.4 & 34.7 & 196.0 \\
0.3 & 169.2 & 209.3 & 277.2 & 248.2 & 28.6 & 208.6 \\
0.5 & 130.8 & 162.4 & 205.0 & 186.7 & 21.9 & 203.8 \\
0.7 & 83.5  & 121.2 & 98.5  & 120.9 & 17.3 & 193.8 \\
1.0 & 44.5  & 44.5  & 14.8  & 14.8  & 8.1  & 183.6 \\
\bottomrule
\end{tabular}
\end{table}

\subsection{Normalized Error Metrics and CacheDP Comparison}
\label{sec:appendix-cachedp}
To ease interpretation across varying count scales, we report scale-free normalized metrics (NMAE, NRMSE, MAPE, SMAPE). Furthermore, we conducted a controlled comparison against CacheDP under a fixed global privacy budget constraint. 

\begin{table}[h]
\caption{Scale-free Error Metrics for LAPRAS Smooth (\textit{query\_heavy}, $|P|=100$, $Q=50$). Values are scaled by $\times100$ for readability.}
\label{tab:normalized_metrics}
\centering
\begin{tabular}{rrcccc}
\toprule
$|P|$ & Overlap & NMAE & NRMSE (Range) & MAPE & SMAPE \\
\midrule
50  & 0\%   & 114.0 & 141.1 & 1946.6 & 1.6 \\
50  & 50\%  & 35.5  & 50.6  & 809.5  & 1.2 \\
50  & 100\% & 31.5  & 38.1  & 164.2  & 0.9 \\
100 & 0\%   & 140.6 & 175.5 & 2383.6 & 1.7 \\
100 & 50\%  & 32.3  & 46.5  & 603.9  & 1.2 \\
100 & 100\% & 21.0  & 26.7  & 189.6  & 0.8 \\
\bottomrule
\end{tabular}
\end{table}

\begin{table}[h]
\caption{Comparison of LAPRAS against CacheDP on the Adult dataset ($|P|=100$, $Q=100$, $S=100$, $\epsilon=1.0$). Metrics are normalized scale-free errors.}
\label{tab:cachedp_comparison}
\centering
\begin{tabular}{c | ccc | ccc}
\toprule
& \multicolumn{3}{c|}{\textbf{NMAE}} & \multicolumn{3}{c}{\textbf{NRMSE (Range-Norm)}} \\
Overlap & Static & Smooth & CacheDP & Static & Smooth & CacheDP \\
\midrule
0.0 & 0.0019 & 0.0026 & 0.2076 & 0.0025 & 0.0033 & 0.2569 \\
0.5 & 0.0007 & 0.0011 & 0.1242 & 0.0010 & 0.0017 & 0.1944 \\
1.0 & 0.0014 & 0.0014 & 0.0588 & 0.0018 & 0.0018 & 0.2011 \\
\bottomrule
\end{tabular}
\end{table}

\subsection{Ablation: Static vs. Smooth Allocation}
\label{sec:appendix-smooth-static}
We compare Static and Smooth allocations under identical stream conditions to identify the crossover point where prediction overlap yields utility gains for dynamic pacing.

\begin{table}[h]
\caption{Direct MAE comparison between Static and Smooth Allocation ($|P|=50$, $Q=50$, $\epsilon=1.0$). Positive improvement indicates Smooth outperforms Static.}
\label{tab:static_vs_smooth}
\centering
\begin{tabular}{c | rrc | rrc}
\toprule
& \multicolumn{3}{c|}{\textbf{Matrix Heavy}} & \multicolumn{3}{c}{\textbf{Query Heavy}} \\
Overlap & Static & Smooth & \% Improv. & Static & Smooth & \% Improv. \\
\midrule
0.5 & 184.7 & 166.3 & +10.0\% & 125.0 & 116.3 & +6.9\% \\
0.6 & 160.3 & 114.0 & +28.8\% & 117.0 & 106.5 & +9.0\% \\
0.7 & 137.3 & 84.5  & +38.4\% & 103.5 & 82.5  & +20.3\% \\
0.8 & 87.0  & 53.7  & +38.2\% & 90.0  & 86.2  & +4.3\%  \\
0.9 & 59.0  & 35.2  & +40.2\% & 78.7  & 65.7  & +16.5\% \\
1.0 & 24.9  & 24.9  & 0.0\%   & 74.6  & 74.6  & 0.0\%   \\
\bottomrule
\end{tabular}
\end{table}

\subsection{Extending LAPRAS with a Query Cache (Smooth+Cache)}
\label{sec:appendix-cache}
We evaluated an extension of our algorithm that maintains a cache of precomputed predicted queries. If an unpredicted query lies in the linear span of the cache, it is answered via post-processing at zero additional privacy cost.

\begin{table}[h]
\caption{Effect of adding caching (Smooth+Cache) to LAPRAS ($|P|=100$, $Q=50$, $\epsilon=1.0$).}
\label{tab:smooth_cache}
\centering
\begin{tabular}{c | rrc | rrc}
\toprule
& \multicolumn{3}{c|}{\textbf{Query Heavy}} & \multicolumn{3}{c}{\textbf{Matrix Heavy}} \\
Overlap & Smooth & Smooth+Cache & Cache vs Smooth & Smooth & Smooth+Cache & Cache vs Smooth \\
\midrule
0.0 & 325.7 & 304.7 & +6.4\%  & 612.4 & 578.8 & +5.5\%  \\
0.3 & 185.1 & 174.0 & +6.0\%  & 325.1 & 296.8 & +8.7\%  \\
0.5 & 107.0 & 110.5 & -3.3\%  & 161.5 & 150.3 & +6.9\%  \\
0.7 & 86.4  & 86.8  & -0.5\%  & 85.6  & 87.3  & -2.0\%  \\
0.8 & 78.3  & 71.2  & +9.1\%  & 64.1  & 41.8  & +34.8\% \\
1.0 & 72.4  & 72.4  & 0.0\%   & 24.1  & 24.1  & 0.0\%   \\
\bottomrule
\end{tabular}
\end{table}

\end{document}
